\documentclass[journal]{IEEEtran}
\usepackage{comment}
\usepackage{soul}
\soulregister{\cite}7 
\soulregister{\ref}7 
  
\usepackage[T1]{fontenc} 
\usepackage{color}
\usepackage{multirow}
\usepackage{array}
\usepackage{amsmath,amsthm,amsfonts,amssymb,bm,mathrsfs} 
\usepackage{url} 
\usepackage{soul} 
\usepackage{graphicx} 
\usepackage{stfloats}
\usepackage{enumitem}
\usepackage{algorithmic}
\usepackage{algorithm}
\usepackage{overpic}
\usepackage{cite} 
\usepackage{wasysym}
\usepackage{placeins}
\usepackage[colorlinks,citecolor=black, filecolor=black, linkcolor=black,urlcolor=black]{hyperref}
\usepackage[table]{xcolor}
\definecolor{slg}{gray}{0.95}

\allowdisplaybreaks[4]

\makeatletter
\newcommand{\removelatexerror}{\let\@latex@error\@gobble}
\makeatother
\usepackage{booktabs, makecell, tabularx, threeparttable}
\usepackage{caption}
\captionsetup[figure]{font=small}
\newcolumntype{C}{>{\centering\arraybackslash}X} 

\setlength{\extrarowheight}{1pt}
\usepackage{stfloats}
\usepackage{siunitx}
\usepackage{multirow}
\usepackage{subfigure}

\newtheorem{assumption}{Assumption} 
\newtheorem{theorem}{Theorem}
\newtheorem{lemma}{Lemma}



\def\ie{{i.e.}}

\def\etal{\textit{et al.}~}
\def\FTX{\mathcal{F}_{\rm{TX}}}
\def\FRXN{{\mathcal{F}_{{\rm RX}{_n}}}}
\def\rxn{${\rm RX}_n$}
\def\tx{$\rm{TX}$}
\def\rxs{$\rm{RXs}$}
\def\rx{$\rm{RX}$}

\hyphenation{Semantic BC}
\usepackage{hyperref} 

\begin{document}
\title{Reinforcement Learning-Based Heterogeneous Multi-Task Optimization in Semantic Broadcast Communications} 
\author{Zhilin~Lu,~Rongpeng~Li,~Zhifeng~Zhao,~and~Honggang~Zhang\vspace{-1cm}
    \thanks{
  Z.~Lu, and R. Li are with Zhejiang University, China (email: \{lu\_zhilin, lirongpeng\}@zju.edu.cn).} \thanks{Zhifeng Zhao is with Zhejiang Lab as well as Zhejiang University, China (email: zhaozf@zhejianglab.com).} \thanks{Honggang Zhang is with City University of Macau, China (email: hgzhang@cityu.edu.mo).}   
}
	\maketitle

\begin{abstract}
   Semantic broadcast communications (Semantic BC) for image transmission have achieved significant performance gains for single-task scenarios. Nevertheless, extending these methods to multi-task scenarios remains challenging, as different tasks typically require distinct objective functions, leading to potential conflicts within the shared encoder. In this paper, 
   we propose a tri-level reinforcement learning (RL)-based multi-task Semantic BC framework, termed \texttt{SemanticBC-TriRL}, which effectively resolves such conflicts and enables the simultaneous support of multiple downstream tasks at the receiver side, including image classification and content reconstruction tasks. Specifically, the proposed framework employs a bottom-up tri-level alternating learning strategy, formulated as a constrained multi-objective optimization problem. At the first level, task-specific decoders are locally optimized using supervised learning. At the second level, the shared encoder is updated via proximal policy optimization (PPO), guided by task-oriented rewards. At the third level, a multi-gradient aggregation-based task weighting module adaptively adjusts task priorities and steers the encoder optimization. Through this hierarchical learning process, the encoder and decoders are alternately trained, and the three levels are cohesively integrated via constrained learning objective. Besides, the convergence of \texttt{SemanticBC-TriRL} is also theoretically established. Extensive simulation results demonstrate the superior performance of the proposed framework across diverse channel conditions, particularly in low SNR regimes, and confirm its scalability with increasing numbers of receivers.
\end{abstract}

\begin{IEEEkeywords}
Semantic broadcast communications, multi-objective, tri-level optimization, self-critical reinforcement learning
\end{IEEEkeywords}
	
\IEEEpeerreviewmaketitle

\section{Introduction}
\label{sec: Introduction}
  \IEEEPARstart{A}{longside} the shift from connecting people to connecting intelligent unmanned systems \cite{liu2022task}, semantic communication (SemCom) \cite{lu2022semantics,lu_rethinking_2023,zhou_adaptive_2022,ni_topology_2025}, which primarily utilizes artificial intelligence (AI), especially deep neural networks (DNNs) as codecs to capture and deliver essential information, has attracted widespread attention from both the academic and industrial communities.  
  According to the task types at the receivers (\rxs), the existing works on SemCom can be broadly 
  categorized into semantic-level data reconstruction and effectiveness-level task execution  \cite{lu2022semantics}. The former case exemplified by joint source-channel coding (JSCC) approaches \cite{xie2021deep, xu2021wireless}, aims to recover the original semantic content as accurately as possible, while for the latter, task-relevant semantic features are extracted to support specific downstream tasks \cite{xie2022task}. Furthermore, multi-task semantic communication frameworks typically fall into several categories, including multiple tasks with separate codecs \cite{xie2022task,patwa2020semantic}, multiple tasks with a single generalized codec \cite{liu2022task,sheng2022multi}, and task-specific decoders paired with a shared encoder \cite{ma2023features,lu2024self,hu2022one}. Considering the superior efficiency at the computational resource-limited transmitter (\tx), such as autonomous vehicles and smart surveillance \cite{yu2025multi}, for delivering the same content to multiple \rxs~\cite{lu2024self}, we primarily focus on the third category, specifically termed as semantic broadcast communications (BC). 
  
  For semantic BC schemes, there have emerged several related studies \cite{hu2022one,sheng2022multi, lyu2024semantic, lu2024self}. For instance, Hu \etal \cite{hu2022one} design a one-to-many semantic BC system by utilizing task-oriented semantic recognizers DistilBERT to distinguish different \rxs~according to emotional properties (\ie, positive or negative). However, since the end-to-end joint training of the encoder and multiple decoders under the JSCC paradigm, the system exhibits limited scalability at the \rxs~sides. To overcome this limitation, our previous work \cite{lu2024self} introduces a reinforcement learning (RL)-based alternating optimization strategy, which enables the encoder and decoders to be updated independently via self-critical learning, thereby enhancing system scalability, compatibility, and convergence stability.

  Nevertheless, these works focus on executing homogeneous tasks at different \rxs~\cite{lu2024self,hu2022one,nguyen2024swin}, or collaboratively completing tasks by utilizing correlated information from multiple transmitters \cite{xu2023semantic,lo2023collaborative}. As a more general case, heterogeneous tasks such as object detection, image classification, and target tracking can occur concurrently. 
  Therefore, multiple \rxs~in semantic BC may inevitably accomplish different tasks based on the same information acquired from the \tx. Correspondingly, a critical challenge naturally arises. 
  Since different tasks correspond to different objective functions, a satisfactory encoding strategy shall competently tackle the potential conflict among multiple objective functions. Due to its blunt ignorance of complications, a simple, direct linear combination of multiple functions \cite{zhang2022unified, wu2024multi} may fail to find an optimal encoding strategy. Instead, beyond the training of DNN parameters, calibrating the relative importance shall be taken into account as well. Theoretically, such a multi-objective optimization falls into the scope of locating the Pareto optimality \cite{sener2018multi}, as conflicting gradients arising from divergent update directions for different tasks may significantly undermine the stability and convergence of the learning process at \tx. Therefore, there emerges a strong incentive to develop a holistic solution to incorporate codec training and adaptive task prioritization strategies to achieve optimal task accomplishment performance. 
  
  In this regard, the gradient-based methods that leverage available gradient information from DNN training have shown superiority in various multi-objective bi-level optimization works \cite{momma2022multi, ye2024first, sener2018multi}, promising a well-suited complement to our setting. To be more specific, within a semantic BC framework consisting of a shared encoder and task-specific decoders, the encoder optimization can naturally exploit the relationships among the heterogeneous decoder objectives. This enables adaptive adjustment of task-wise weight assignments, thereby mitigating potential gradient conflicts in the encoder updates arising from multiple objective functions at \rxs.

    \begin{table*}[t!]
		\centering
		\caption{Summary and comparison of multi-task SemCom schemes}    
		\label{tab: comparison-BC}	
		\begin{tabular}{m{1cm} m{1.5cm} m{2cm} m{2cm} m{1.5cm} m{7.5cm}} 		
			\hline
			Refs& Task heterogeneity&Decoupled codec training& Balanced task performance &Communication efficiency&Brief Description\\ 
			\hline
            \cite{sheng2022multi}& \CIRCLE & \Circle & $\RIGHTcircle$ & \CIRCLE & It lacks investigation into scalability with respect to the number of \rxs.\\
            \cite{ma2023features}& \CIRCLE & \Circle & $\RIGHTcircle$ & \CIRCLE& It involves a complex model architecture with a large number of parameters.\\
            \cite{lyu2024semantic} & $\CIRCLE$&\Circle &$\RIGHTcircle$ &$\RIGHTcircle$&The intrinsic connections among tasks are not fully considered. \\
            \cite{zhang2024unified} & $\CIRCLE$& \Circle & $\RIGHTcircle$&$\RIGHTcircle$& The unified codebook for feature representation needs to be redesigned when the task changes. \\
            \cite{sagduyu2023multi}&  $\RIGHTcircle$  & \Circle & $\RIGHTcircle$&$\RIGHTcircle$  & The task weights are assigned directly without further optimization.\\
            \cite{razlighi2024semantic}&$\RIGHTcircle$ &\Circle &  $\RIGHTcircle$ & $\RIGHTcircle$ & The intrinsic connections among tasks are not fully considered.\\
            \hline
            This work & \CIRCLE & \CIRCLE &  \CIRCLE &  \CIRCLE & By integrating intrinsic task relationships and relative importance, the encoder is optimized based on PPO approach, guided by a weight assignment module.\\
            \hline
            \multicolumn{6}{>{\footnotesize\itshape}r}{Notations: \rm{${\CIRCLE}$} \emph{indicates fully included;} \rm{${\RIGHTcircle}$} \emph{means partially included; $\Circle$ indicates not included.}}
		\end{tabular}
\end{table*}

  In this paper, we propose a tri-level RL-based multi-task semantic BC framework, \texttt{SemanticBC-TriRL}, which enables alternating updates of the encoder and decoders while dynamically balancing the weights of multiple objective functions at the \rxs. 
  To support flexible and independent optimization of the encoder and decoders, we build upon our previous work \cite{lu2024self}, and further introduce PPO to train the shared encoder in a task-oriented manner, guided by receiver-side performance requirements, thereby decoupling the training of the encoder and decoders.
  Furthermore, to address potential gradient conflicts among multiple task objectives during the encoder parameter updates, we incorporate a multi-gradient aggregation method that adaptively adjusts the loss weights assigned to each task. In this formulation, the task-specific decoders, the shared encoder, and the weight assignment module are jointly optimized by a tri-level alternating learning mechanism, corresponding to the first, second, and third levels, respectively. The optimization is carried out in a bottom-up manner, where the first level solutions implicitly guide the second level updates. By seamlessly integrating the optimization of encoder-decoders training with the task weight adjustment, \texttt{SemanticBC-TriRL} addresses the constrained tri-level learning problem effectively. A detailed comparison with closely related works is provided in Table~\ref{tab: comparison-BC}, and the major contributions of our work are three-fold:

  \begin{itemize}
  \item We put forward \texttt{SemanticBC-TriRL}, a tri-level RL-based one-to-many semantic BC framework for heterogeneous task execution. The learning process is formulated as a multi-objective optimization problem with corresponding constraints, in which the encoder, task-specific decoders, and weight assignment module are optimized by the alternate training in a bottom-up manner. 
  In this way, \texttt{SemanticBC-TriRL} seamlessly accommodates diverse objective functions and performance requirements across different \rxs~and allows for ready expansion to additional \rxs~with flexible scalability.
  \item At the first level, the decoders are updated locally based on the singleton condition \cite{ye2021multi,liu2020generic}. For the second level of encoder optimization, we resort to the PPO algorithm for its enhanced decision-making capabilities with the given optimized decoders. 
  \item At the third level, to mitigate potential conflicts across multiple tasks,
  the weight assignment module utilizes a multi-gradient aggregation method to adaptively adjust task weights. Besides, we provide the non-asymptotic convergence analysis for \texttt{SemanticBC-TriRL}.  
  \item Based on extensive simulation results, the proposed framework demonstrates significant performance improvement in the multi-task semantic BC system for both reconstruction and classification tasks. Moreover, \texttt{SemanticBC-TriRL} exhibits strong adaptability to varying channel conditions and consistently outperforms conventional methods such as ``BPG+LDPC'' and DNNs-based semantic models including ``JSCC'' \cite{bourtsoulatze2019deep} and ``Deep JSCC'' \cite{lyu2024semantic}. In addition, the results further attest to its scalability with an increased number of \rxs.

  \end{itemize} 
  
  The rest of this paper is mainly organized as follows. Section \ref{sec: Related Works} reviews the recent related works. In Section \ref{sec: System Model and Problem Formulation}, we present the framework of \texttt{SemanticBC-TriRL} and formulate the optimization problem. In Section \ref{sec: Architecture of the Proposed SemanticBC-TriRL}, we describe the details of the tri-level \texttt{SemanticBC-TriRL} and analyze the theoretical convergence. In Section \ref{sec: Simulation Results}, the numerical results are illustrated and discussed. Finally, conclusions are drawn in Section \ref{sec: Conclusions}.

\section{Related Works}
\label{sec: Related Works}

Most SemCom works are task-specific or task-oriented, ensuring that only the most crucial, pertinent, and beneficial information is effectively transmitted and applied in downstream tasks. Taking the example of task execution applications, SemCom learns to transmit task-specific semantics for decision-making at the \rxs \cite{xie2022task, jankowski2020wireless,hu2023robust,weng2023deep }. For instance, for image retrieval tasks under power and bandwidth constraints, \cite{jankowski2020wireless} proposes a robust JSCC model, which integrates a retrieval baseline with a feature encoder, followed by scalar quantization and entropy coding. In \cite{hu2023robust}, the authors design a novel masked vector quantized-variational autoencoder (VQ-VAE) with the noise-related masking strategy for image classification,  retrieval and reconstruction tasks. Except for image-oriented tasks, \cite{weng2023deep} develops a deep learning-based
SemCom system for speech recognition and synthesis tasks to achieve diverse system outputs. The abovementioned works mainly focus on enhancing the performance in single-user/task SemCom systems. But, communication challenges (e.g., scalability) emerge as the number of users increases.

In terms of multi-user/task SemCom systems \cite{zhang2024unified, kang2022task, cao2025task}, several recent efforts have sought to improve scalability and task adaptability. 
Zhang \etal \cite{zhang2024unified}  propose a unified multi-task SemCom solution that can adjust the number of transmitted symbols for six widely recognized tasks with different modalities. Furthermore, they design a unified codebook for feature representation, where only the indices of these task-specific features in the codebook are transmitted. \cite{kang2022task} develops a lightweight model on the frontend unmanned aerial vehicles (UAVs) for semantic block transmission with perception of images and channel conditions, and deep reinforcement learning (DRL) is employed to explore the semantic blocks that most contribute to the back-end classifier under various channel conditions. These approaches are predominantly task-specific and focus primarily on optimizing task performance at \rxs. However, these works ignore different requirements of diverse tasks deployed on multi-user systems \cite{wang2024feature} and tend to inadequately exploit the feedback information from \rxs~that are crucial for refining the encoder's performance. By contrast, \cite{wang2024feature} designs an importance-aware joint source-channel coding (I-JSCC) framework for task-oriented semantic communications, in which a joint semantic-channel transmission mechanism is designed by selectively transmitting task-important features to reduce communication overhead. \cite{li2023non} introduces a non-orthogonal multiple access (NOMA)-based multi-user SemCom system that supports diverse data modalities and enhances spectral efficiency by superimposing semantic symbols of multiple users. The system employs an asymmetric quantizer and a neural network for intelligent multi-user detection, demonstrating superior performance and robustness in low-to-medium SNR scenarios compared to traditional methods. 

Moreover, some multi-task frameworks \cite{sheng2022multi, sagduyu2023multi, razlighi2024semantic, ma2023features, gao2025multi, lyu2024semantic} leverage the shared encoder to reduce the number of transmissions required for task completion at different \rxs. Among these, some \cite{sheng2022multi, lyu2024semantic, gao2025multi} focus on performing multiple tasks at a single \rx, while others \cite{ma2023features, sagduyu2023multi, razlighi2024semantic} aim to broadcast the extracted semantic features to multiple \rxs, each executing a distinct task. 
In the former category, Shen \etal \cite{sheng2022multi} propose a multi-task SemCom system for textual tasks based on bidirectional encoder representations from transformers (BERT), which jointly performs text classification and regression tasks in an end-to-end training framework. However, their work lacks investigation into scalability with respect to the number of \rxs, which limits its applicability in more general multi-receiver scenarios. \cite{lyu2024semantic} designs a multi-task SemCom for image recovery and classification at the same time on top of a new training framework based on gated deep JSCC via domain randomization. Although their design demonstrates efficient support for multiple tasks, the intrinsic connections among tasks are not fully considered. For the latter category, \cite{sagduyu2023multi} designs a multi-task deep learning approach that involves training a common encoder at \tx~and individual decoders at the \rxs~for better accomplishing multiple image classification tasks, in which the overall loss is obtained by taking the weighted sum of individual tasks. However, they appear to have noticed the issue of tasks weight assignment during training, but did not conduct further investigation or optimization. 
Ma \etal \cite{ma2023features} propose a practical robust features-disentangled multi-user semantic BC framework, where \tx~employs a feature selection module to broadcast users' intended semantic features, while \rxs~utilize feature completion modules to capture the desired information, significantly improving transmission efficiency. However, both the feature selection and the feature completion rely on complex model architectures and a large number of parameters. 

Based on the above observations, most of the existing semantic BC frameworks do not take into account intrinsic connections and different importance of tasks at \rxs, and few works associated optimization mechanism to achieve consistent adaptation between the \tx~and \rxs. Addressing this challenge, our prior work \cite{lu2024self} regards the performance of the tasks at \rxs~as rewards, employing a lightweight self-critical RL approach to alternately optimize the encoder and decoders, with tasks uniformly weighted based on their importance level. Nevertheless, given that the semantic BC scenarios typically involve heterogeneous tasks, the update of the encoder during the training phase should integrate diverse requirements of these tasks, as potentially conflicting gradients arising from divergent update directions for different tasks may significantly undermine the stability and convergence of the learning process. Therefore, in heterogeneous multi-task semantic BC scenarios, the model design should holistically incorporate codec training and adaptive task prioritization strategies to achieve optimal task accomplishment performance.

\section{System Model and Problem Formulation} 
\label{sec: System Model and Problem Formulation}

\begin{figure*}[hbt]
   \centering
   \setlength{\abovecaptionskip}{0.2cm}
		\includegraphics[width=0.98\linewidth]{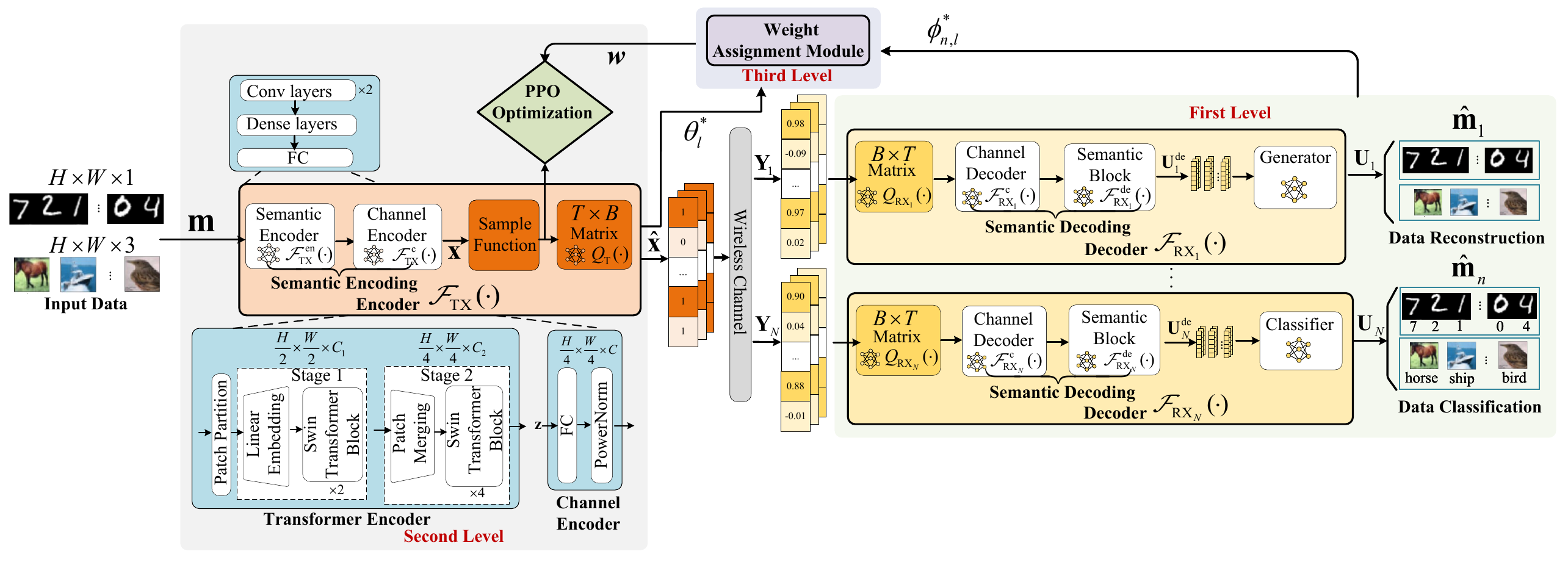}\\ 
\caption{The framework of the proposed multi-task semantic BC system.}
		\label{system_model}
\end{figure*}

\begin{table}[hbt]
    \centering
    \caption{Notions used in this paper.}
    \label{notation}
    \begin{tabular}{m{1.9cm} m{5.9cm}}
    \toprule   
    Notation      & Definition       \\
    \midrule
    $N$ & The number of \rxs\\
    $\FTX(\cdot), \FRXN(\cdot)$ &  The abbreviated terms for \tx~and \rxn~($n\sim N$) respectively\\ 
    $Q_{\rm{TX}}(\cdot), Q_{\rm{RX_n}}(\cdot)$ &The quantization and dequantization layers\\
    ${\boldsymbol{m}}, {{\hat{\boldsymbol{m}}}_n}$  &   Input message of \tx~ and output results of \rxn  \\
    $\Theta_n (\cdot)$ &  Task performance metric of \rxn\\ 
    ${\theta}, \phi_n$ & Parameters for semantic encoder and semantic decoder $n$\\
    $\boldsymbol w=\{ w_{n}\}_{n=1}^{N}$  & Learnable weight vector of multiple \rxs\\
    $v$& A variable denoted as $v\equiv (\boldsymbol{w},\theta)$ \\  
    $F(v)$& Third level function\\ 
    $g(v)$& Constraint function\\ 
    $\tilde g(v)$& Approximated constraint function\\ 
    $\kappa$ & Local iterations at each decoder\\
    $H$ & Inner steps at the encoder\\
    $L$ & Number of iterations in each epoch\\
    $T$&Batch size\\ 
    $\tilde g(v)$& Approximated constraint function\\
    $M$ & The upper bound of $\nabla F(v)$\\  
    $B$ & The fixed number of bits for a transmitted image\\
    \bottomrule  
    \end{tabular}  
\end{table}

\subsection{System Model}  

  Beforehand, Table \ref{notation} summarizes the key notations in this paper.
  As shown in Fig. \ref{system_model},  we consider a multi-task semantic BC scenario, involving a shared \tx~and $N$ task-specific \rxs~executing either classification or reconstruction tasks.  
  Within the framework, the input image can be denoted as $\boldsymbol{m}$. The semantic encoder extracts the semantic features of $\boldsymbol{m}$ and maps it to the semantic space by adopting specialized DNN structures. 
  Specifically, consider an RGB image $\boldsymbol{m} \in {\mathbb{R}}^{H \times W \times 3}$ as an example, it is initially split into ${\frac{H}{2}} \times {\frac{W}{2}}$ non-overlapping patches \cite{yang2023witt}, undergoing a series of Swin Transformer \cite{liu2021swin} blocks for feature extraction. Since the number of stages is contingent upon the image resolution, we set two stages according to empirical results. 
  In terms of channel encoder, fully connected (FC) layers are applied to adapt to channel conditions. Thus, a unified source-channel encoding process $\mathcal{F}_{\rm{TX}}(\cdot)$ (parameterized by $\theta$) includes semantic encoding $\mathcal{F}_{\rm{TX}}^{\rm{en}}(\cdot)$ and channel encoding $\mathcal{F}_{\rm{TX}}^{\rm{c}}(\cdot)$ can be given by 
\begin{equation}
    \label{eq: encoding}
    \boldsymbol{x} = \mathcal{F}_{\rm{TX}}^{\rm{c}}(\mathcal{F}_{\rm{TX}}^{\rm{en}}(\boldsymbol{m} ;\theta_{\rm{en}});\theta_{\rm{c}}), 
\end{equation}
 Furthermore, as shown in Fig. \ref{system_model}, the quantization layer $Q_{\rm{TX}}(\cdot)$ further maps $\boldsymbol{x}$ to ${\boldsymbol{\hat x}} \in {\mathbb{Z}}^{B}_{2}$ for subsequent transmissions across multiple physical channels, where $B$ is the fixed numbers of bits for a transmitted image, $\mathbb{Z}_{2} \in \{ 0,1\}$, that is 
 \begin{equation}
    \label{eq: quantization}
    \boldsymbol{\hat x} = Q_{\rm{TX}}(\boldsymbol{x}),
\end{equation}
In this sense, the extracted features are compressed into a sequence of bits with a certain channel bandwidth ratio (CBR). For MNIST dataset,  $\mathrm{CBR} = \frac{B}{8 \times 1\times H \times W}$, while for CIFAR-$10$ dataset, $\mathrm{CBR} = \frac{B}{8 \times 3\times H \times W}$, such a binary quantization method is advantageous for transmission \cite{tong2024alternate,ren_separate_2025}.

  At the \rxs~sides, the transmitted data in semantic BC experiences heterogeneous channel conditions corresponding to different \rxs. Thus, the received signal for \rxn~($n\sim N$) can be denoted as 
\begin{equation}
    \label{eq: received signal} 
    \boldsymbol{Y}_n = {\mathcal {H}_n}(\boldsymbol{\hat x})+\boldsymbol{z}, 
\end{equation}
where ${\mathcal {H}_n}$ is the channel state information (CSI) vector and is usually modeled as an additive white Gaussian noise (AWGN) channel or multiplicative Rayleigh fading channel. Meanwhile, we assume the channel is non-differential, and the training gradient cannot be transmitted from the \tx~to \rxs.
$\boldsymbol{z}$ is noise vector and sampled from a Gaussian distribution, \ie, $\boldsymbol{z}\sim \mathcal{CN}(0, {\sigma ^2}\boldsymbol{I})$.  

Analogously, the dequantization layer $Q_{\rm{RX_n}}(\cdot)$ and decoding process $\FRXN (\cdot)$ (parameterized by $\phi_n$) for \rxn ~involves the reverse steps. To this end, the decoded signal from the $n$-th channel can be given by
\begin{equation} 
    \label{eq: decoded signal} 
    {\boldsymbol U}^{\rm de}_{n} = 
    \mathcal{F}_{{\rm RX}_n}^{\rm{de}}(\mathcal{F}_{{\rm RX}_n}^{\rm{c}}(\boldsymbol{Y}_n;\phi_{n,\rm{c}});\phi_{n,\rm{de}}).
\end{equation}
Afterward, the semantic features ${\boldsymbol U}^{\rm de}_{n}$ are further processed by a light, task-specific module for downstream tasks.   
Regarding the reconstruction task, to overcome the distortion caused by the channel environment, we adopt the mean-squared error (MSE)  between the source and recovered semantic image (i.e., $\boldsymbol{m}^{(t)}$ and $\hat{\boldsymbol{m}}_n ^{(t)}$ as a loss function to learn semantic decoder $\mathcal{F}_{\rm{RX}_n}(\cdot)$, which is denoted as 
 \begin{equation}
    \label{eq: rec-loss}
    \mathcal{L}_{\rm{RX}_n,MSE} ( \phi _n )= \mathbb{E}_{t \sim T}\left [ \left\|{\boldsymbol{m}^{(t)} - \hat{\boldsymbol{m}}_n ^{(t)}}\right\|^2_2\right], 
\end{equation} 
where $t$ denotes the index of the image sample.   

 In terms of the classification task, the semantic decoder $\mathcal{F}_{\rm{RX}_n}(\cdot)$ will be trained by the cross-entropy (CE) loss function, 

 \begin{equation}
    \label{eq: cla-loss}
    \mathcal{L}_{\rm{RX}_n,CE} ( \phi _n )=\frac{1}{T} \sum_{t=1}^{T} -p^{(t)}_n \log q^{(t)}_n
\end{equation} 
where $p^{(t)}_n$ (resp. $q^{(t)}_n$) denotes the ground-truth (resp. predicted) label distribution corresponding to $\hat{\boldsymbol{m}}_n ^{(t)}$ within the $t$-th sample.

\subsection{Problem Formulation} 
\label{sec:problem}

Given the described semantic BC system with one \tx~and $N$ \rxs, our goal is to maximize the task accomplishment performance for \rxn ~by learning the optimal parameters $\left\langle { \{ w_n^ *\}_{n=1}^{N}, {\theta ^ * },\{ \phi _n^ *\}_{n=1}^{N}} \right\rangle$ of the system $\left\langle {{\mathcal {F}_{{\rm{TX}}}},\{{{\mathcal{F} }_{{\rm{R}}{{\rm{X}}_n}}}}\}_{n=1}^N \right\rangle $, that is,
\begin{align}
    \label{eq: optimization problem1}
    &\left\langle {\{ w_n^ *\}_{n=1}^{N}, {\theta ^ * },\{ \phi _n^ *\}_{n=1}^{N}} \right\rangle  \nonumber \\
    &= \mathop {\arg \max }\limits_{\{ w_n\}_{n=1}^{N}, \theta, \{ \phi _n \}_{n=1}^{N}} \sum\nolimits_{n=1}^N \Theta _{ \phi _n}(\boldsymbol{m}, {\hat{\boldsymbol{m}}_n}), 
\end{align}
where $\{ w_n\}_{n=1}^{N}$ represent the learnable weights without known apriori, ${\Theta _{ \phi _n}}$ is the performance metric corresponding to the decoder $n$. For example, for the image reconstruction task, we employ PSNR (peak signal-to-noise ratio) or perceptual metric SSIM (structural similarity index method) as outlined in \cite{tong2024alternate}. For classification tasks, the probability of correct classification is used as the accuracy metric. 

\begin{figure*}[ht]
\begin{subequations}
\begin{align}\tag{8}\label{eq: optimization problem2}
&\min \limits_{\{ w_{n,l}\}_{n=1}^{N}}F(\{ w_{n,l}\}_{n=1}^{N};\theta_l^*, \phi_{n,l}^*)\\  
 & \text{s.t.} \quad \phi_{n,l}^* = \arg\max \limits_{\phi_n} \mathcal{L}_{\rm{RX}_n}(\phi_{n,l};w_{n,l},\theta_l), \qquad \sum\nolimits_{n = 1}^N {{w_n} = 1},w_n \ge 0,\forall n  \label{eq: optimization problem2_a}\\
 & \qquad \theta_l^*(\{ w_{n,l}\}_{n=1}^{N}) = \arg\max_\theta \mathcal{L}_{\rm{TX}}(\theta;\{ w_{n,l}\}_{n=1}^{N}, \phi_{n,l}^*)\label{eq: optimization problem2_b_new}\\
 & \qquad g(\theta_l;\{ w_n\}_{n=1}^{N}, \phi_{n,l}^*) =\mathcal{L}_{\rm{TX}}\left(\theta_l;\{ w_n\}_{n=1}^{N}, \phi_{n,l}^*\right)- \mathcal{L}_{\rm{TX}}\left(\theta_l^*(\{ w_{n,l}\}_{n=1}^{N});\{ w_{n,l}\}_{n=1}^{N}, \phi_{n,l}^*\right) \le 0  \label{eq: optimization problem2_b}
\end{align}
\end{subequations}
\hrulefill
\end{figure*}

\begin{figure*}[ht]
   \begin{equation}
   \label {eq: estimated g(v)}
        \tilde g(\theta_l;\{ w_n\}_{n=1}^{N}, \phi_{n,l}^*)=\mathcal{L}_{\rm{TX}}\left(\theta_l;\{ w_n\}_{n=1}^{N}, \phi_{n,l}^*\right)- \mathcal{L}_{\rm{TX}}\left(\tilde \theta_l^H(\{ w_{n,l}\}_{n=1}^{N});\{ w_{n,l}\}_{n=1}^{N}, \phi_{n,l}^*\right)
   \end{equation}
   \hrulefill    
\end{figure*}
   
\begin{figure}[t!] 
   \centering
   \setlength{\abovecaptionskip}{0.4cm}
		\includegraphics[width=0.98\linewidth]{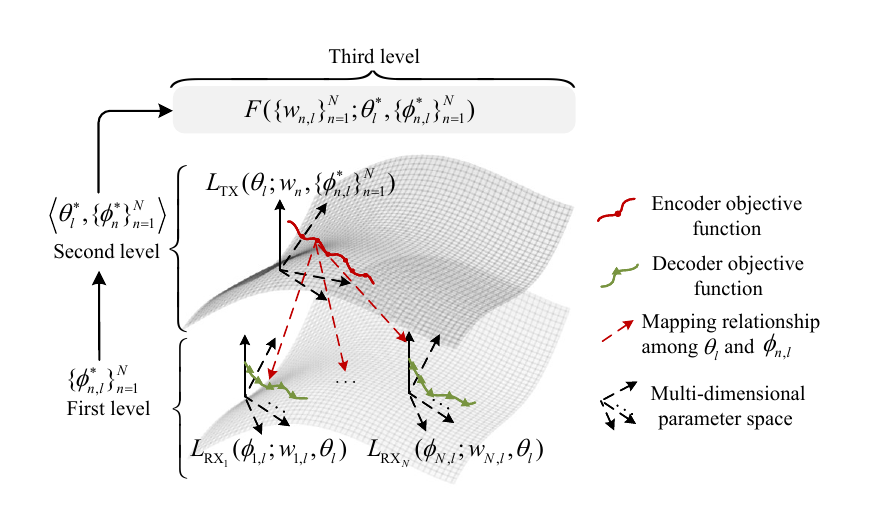}\\ 
		\caption{The parameters relationship between the encoder and decoders.} 
  \label{singleton_condition} 
\end{figure}

  The problem in \eqref{eq: optimization problem1} is challenging to solve by the traditional JSCC framework, due to the hierarchical structure and intricate dependency between the encoder and multiple task-specific decoders. Consequently, it is infeasible to effectively balance a trade-off among the possible conflicting objectives by a simple linear scalarization approach. Meanwhile, the inherent characteristics of multi-task learning naturally lead to a multi-objective optimization problem, often involving inconsistent or even conflicting objectives \cite{sener2018multi}.
  To address these challenges, inspired by the theoretical framework established in our previous work \cite{lu2024self}, we employ an alternating learning-based tri-level mechanism to iteratively train the encoder-decoders pair as well as the weight assignment via RL approach.   

   Specifically, as illustrated in Fig. \ref{singleton_condition}, the problem in \eqref{eq: optimization problem1} is decomposed into a three-level alternating learning framework, where each level is structurally coupled yet allows for independent optimization, enabling flexible adaptation to the heterogeneous tasks of the \rxs.
   Thus, consistent with the singleton condition in \cite{ye2021multi,liu2020generic}, we assume that at the $l$-th iteration ($l \le L$), given a fixed encoder parameter $\theta_l$, there exists a unique optimal solution $\phi_{n,l}^*$ for each decoder, obtained after sufficient local updates. This hierarchical relationship conceptualizes the optimization of decoder and encoder parameters as the first and second levels, respectively, capturing their decoupled yet interdependent roles within the overall learning framework. In this formulation, the first level provides the encoder (\ie, the second level) with the locally optimized decoder parameters $\{\phi_{n,l}^*\}_{n=1}^N$, conditioned on the current encoder parameter $\theta_l$ and the weight configuration $\{w_{n,l}\}_{n=1}^N$. Subsequently, the encoder performs a limited number of local updates via RL approach based on the above optimized decoder parameters $\{\phi_{n,l}^*\}_{n=1}^N$ and the current weight $\{w_{n,l}\}_{n=1}^N$, obtaining an estimated $\theta_l^*$. In this setting, we expect the gap in the \tx~loss before and after the encoder update to be as small as possible, which implies that the current $\theta_l$ is already close to the optimality under the given weight configuration $\{w_{n,l}\}_{n=1}^N$.
   On top of these, a weight assignment module (which will be detailed in Section \ref{sec: Weight Assignment Module}) is integrated at the third level to adaptively optimize the weights for multiple possibly inconsistent or even conflicting objectives at decoder sides. Thus, the encoder leverages this weight assignment module to guide the minimization of its loss. As for the second and third levels, it is assumed that the frequency of parameter updates for $\theta$ and $\{w_{n,l}\}_{n=1}^N$ is consistently upheld. In this way, the third level implicitly steers the joint optimization by dynamically learning the weight assignment $\{w_{n,l}\}_{n=1}^N$, thereby enabling dynamic performance balancing among heterogeneous \rxs.  With this tri-level RL approach, the entire system alternately optimizes the decoders, the encoder, as well as weight assignment module in a bottom-up manner. These steps constitute a single \textit{update cycle}, facilitating iterative learning. 
   
   Finally, this tri-level optimization is formulated as \eqref{eq: optimization problem2}, where the definition of $\mathcal{L}_{\rm{TX}}$ will be provided in Section \ref{sec: The encoder optimization}. 
   In particular, \eqref{eq: optimization problem2_a} corresponds to the first-level local optimization of each decoder, that is, for a given encoder parameter $\theta$, the optimal decoder $n \in \{1,\cdots,N\}$ is approximated by $\kappa$  steps of gradient descent as $ \phi _{n,l}^ {k+1}=\phi _{n,l}^ {k}-\eta \nabla_{\phi_n} \mathcal{L}_{\rm{RX}_n}(\{ \phi _{n,l}^ k\}; \{w_{n,l}\}_{n=1}^N,\theta_l)$, where $\eta$ is the learning rate, $k=0,\dots, \kappa-1$ and $\mathcal{L}_{\rm{RX}_n}$ is the task-specific loss function at \rxn. By \cite{lu2024self}, the decoders are assumed to eventually converge to $\{ \phi _{n,l}^* \}_{n=1}^{N}$ by $\kappa$ iterations on top of given $w_{n,l}$ and $\theta_l$. 
   Similarly, $\theta_l^*(\{ w_{n,l}\}_{n=1}^{N})$ can be attained by \eqref{eq: optimization problem2_b_new}, while \eqref{eq: optimization problem2_b} implies a natural outcome  after the maximization of $\theta_l$.  
   In practice, for simplicity, the optimal parameter $\theta_l^*(\{ w_{n,l}\}_{n=1}^{N})$ can be approximated through an $H$-step gradient descent inspired by a bi-level optimization \cite{ye2024first}. Thus, we obtain an estimate of $ \tilde g(\theta_l;\{ w_n\}_{n=1}^{N}, \phi_{n,l}^*)$ by replacing $\theta_l^*(\{ w_{n,l}\}_{n=1}^{N})$ with $\tilde \theta_l^H (\{ w_{n,l}\}_{n=1}^{N})$, as shown in \eqref{eq: estimated g(v)}.

\section{Implementaions of SemanticBC-TriRL}   
\label{sec: Architecture of the Proposed SemanticBC-TriRL}  
  In this section, we discuss how to optimize the semantic BC framework through the tri-level alternate learning by the weighted PPO algorithm. As shown in Fig. \ref{fig: alternate learning}, we utilize the alternating learning mechanism between codec training and adaptive task prioritization. Hence, at the first level, we highlight the independent calibration of the decoders using a traditional supervised approach. For the second level, the encoder is updated upon the optimized decoders via the PPO algorithm with a given weight assignment. Then, a weight assignment module is employed to determine the optimal weights for the multiple tasks on top of the above two levels. Finally, we also provide the convergence analysis of \texttt{SemanticBC-TriRL}.  

\begin{figure*}[hbt]
   \centering
   \setlength{\abovecaptionskip}{0.1cm}
		\includegraphics[width=0.70\linewidth]{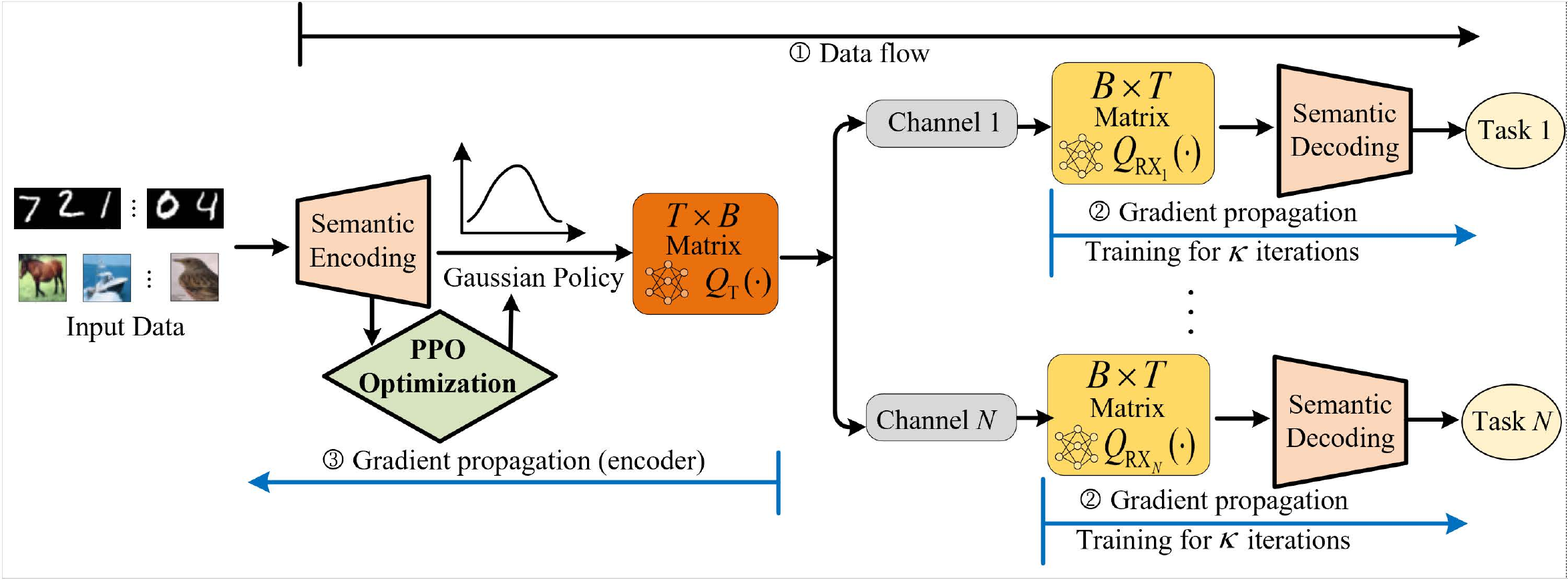}\\ 
\caption{Optimization process under the alternate learning mechanism. \textcircled{1} is the direction of data flow; \textcircled{2} denotes the direction of gradient propagation of \rxn; \textcircled{3} is the direction of gradient propagation of \tx.}
		\label{fig: alternate learning} 
\end{figure*} 

\subsection{Components of Decoder Network} 
\label{sec: The Decoder Optimization}

  Since the downstream tasks have a huge semantic decision space, we adopt simple and efficient traditional supervised learning to better match the performance requirements of various tasks. Meanwhile, unlike the traditional end-to-end training in most semantic communication works, we assume that the encoder and decoders are alternately and independently trained, as outlined in our prior work \cite{lu2024self}. For the decoders with tailored network architectures, which will be enumerated in Section \ref{sec: Simulation Results}, are employed to optimize the loss functions in \eqref{eq: decoded signal} and \eqref{eq: rec-loss}.

\subsection{ Components of the PPO-Driven Encoder Network} 
\label{sec: The encoder optimization}

  Building on the optimized decoders discussed in the previous subsection, as well as the given task-specific weights (which will be determined by the weight assignment module in Section \ref{sec:  Weight Assignment Module}), we now introduce how to effectively train the encoder, so as to better match the performance requirements of various tasks and optimize the objective function at the semantic level. Specifically, we regard the encoder network as an agent, formulate its optimization as a Markov decision process (MDP) setting $(\mathcal{S},\mathcal{A}, P, R,\gamma)$, and leverage a PPO approach to solve it. 

  In particular, the state space $\mathcal{S}$ encompasses the output of semantic encoding, \ie, $s^{(t)}\leftarrow \boldsymbol{x}^{(t)}$, as defined in \eqref{eq: encoding}. The action is  jointly determined by the state $s^{(t)}$ and sample policy $\pi_{\theta}$, parameterized by $\theta$, which can be represented as follows
\begin{equation}
    \label{eq: Gaussian-sample}
   \pi_{\theta} \sim S \left( {\mathcal{N}\left( {\mu (s^{(t)}),\sigma ^2(s^{(t)})} \right)} \right) 
\end{equation} 
where $\mathcal{N}$ denotes a Gaussian distribution and $S$ denote the reparameterized Gaussian sampling strategy with learnable mean value $\mu(s^{(t)})$ and variance $\sigma(s^{(t)})$. 

  Meanwhile, when the encoder takes the action $a^{(t)}\in \mathcal{A}$ with policy $\pi_{\theta}$ under the state $s^{(t)}\in \mathcal{S}$, the state transition function $P(s^{(t+1)}|s^{(t)},a^{(t)})$ is deterministic between two adjacent states. Additionally, for the sake of simplification, we assume the discounted factor $\gamma =1$. Moreover, for a sample $t$ with batch size $T$, the reward function for this learning is assumed to be sparse \cite{tong2024alternate}, as the rewards for the encoder are weighted by the decoders, and decoders can only quantify the corresponding performance after obtaining the final output. To this end, the learning reward for the encoder is denoted as  
  \begin{equation}
  \label{eq: encoder reward} 
      \begin{aligned}
          r^{(t)}_{\pi_\theta}  =  \sum_{n=1}^N w_n {\Theta _{ \phi _n}}(\boldsymbol{m}^{(t)}, {\hat{\boldsymbol{m}}_n^{(t)})})
      \end{aligned}
  \end{equation}  
where ${\Theta _{ \phi _n}}$ is the performance metric corresponding to the decoder $n$.

Upon the above preliminaries of the encoder, the PPO-based encoder learning process is depicted in Fig. \ref{ppo}. Algorithmically, as shown in Fig. \ref{ppo}(a), the actor part leverages two policies, \ie, current policy $\pi_\theta$ and previous policy $\pi_{\theta_{\rm old}}$, in which the parameters of the current policy will be copied to the old policy. In addition, the probability ratio $\mu_t=\frac{\pi_\theta(a^{(t)}|s^{(t)})}{\pi_{\theta_{\mathrm{old}}}(a^{(t)}|s^{(t)})}$ is utilized to prevent large policy update within a short period of time. Furthermore, due to the one-step decision process in our case, the advantage function is simplified as $A^{(t)}_{\pi_\theta} = r_{\pi_\theta}^{(t)}-V(s^{(t)})$. Thus, the loss function of the actor network is represented as 
\begin{align}
   \label{eq: actor loss} 
      &\mathcal{L}_{\rm{actor}}(\theta;\{w_n\}_{n=1}^N, \{\phi_n^*\}_{n=1}^N) \nonumber\\
      &=- \mathbb{E}_{t \sim T} \left[\min \left(\mu_t A^{(t)}_{\pi_\theta}, \operatorname{clip}\left(\mu_t, 1-\epsilon, 1+\epsilon\right) A^{(t)}_{\pi_\theta}\right)\right]  
\end{align}
where the function $\operatorname{clip}(\cdot,1-\epsilon,1+\epsilon)$ is designed to prevent excessive policy updates by constraining the policy ratio within the range $[1-\epsilon, 1+\epsilon]$, where $\epsilon$ is a hyperparameter. This mechanism penalizes abrupt policy changes, ensuring stable optimization and preventing drastic shifts in the policy. 

As for the critic network, as illustrated in Fig. \ref{ppo}, along with the actor network is updated, the value function $V_{\chi}(s^{(t})$ is estimated by another critic network using the MSE loss function 
\begin{equation}
  \label{eq: value loss} 
      \mathcal{L}_{\rm value}(\chi) =\mathbb{E}_{t \sim T}\left[ \left(V_{\chi}(s^{(t)})- V^{\rm targ}(s^{(t)})\right)^2 \right]
  \end{equation}  
where the target value $V^{\rm targ}(s^{(t)})$ is simplified to the immediate reward $r_{\pi_\theta}^{(t)}$ since there is only one decision step with meaningful rewards. 

\begin{figure*}[t!]
  \begin{equation}
    \label{eq: TX loss}
     \mathcal{L}_{\rm{TX}}\left(\theta;\{w_n\}_{n=1}^N, \{\phi_n^*\}_{n=1}^N\right)= \mathcal{L}_{\rm{actor}}\left(\theta;\{w_n\}_{n=1}^N, \{\phi_n^*\}_{n=1}^N\right) + \sum_{n=1}^{N} w_n\mathcal{L}_{\rm{RX}_n}\left(\theta;\{w_n\}_{n=1}^N, \{\phi_n^*\}_{n=1}^N\right)    
\end{equation}
\hrulefill 
\end{figure*}

\begin{figure*}[hbt]
   \centering
   \setlength{\abovecaptionskip}{0.3cm}
		\includegraphics[width=0.85\linewidth]{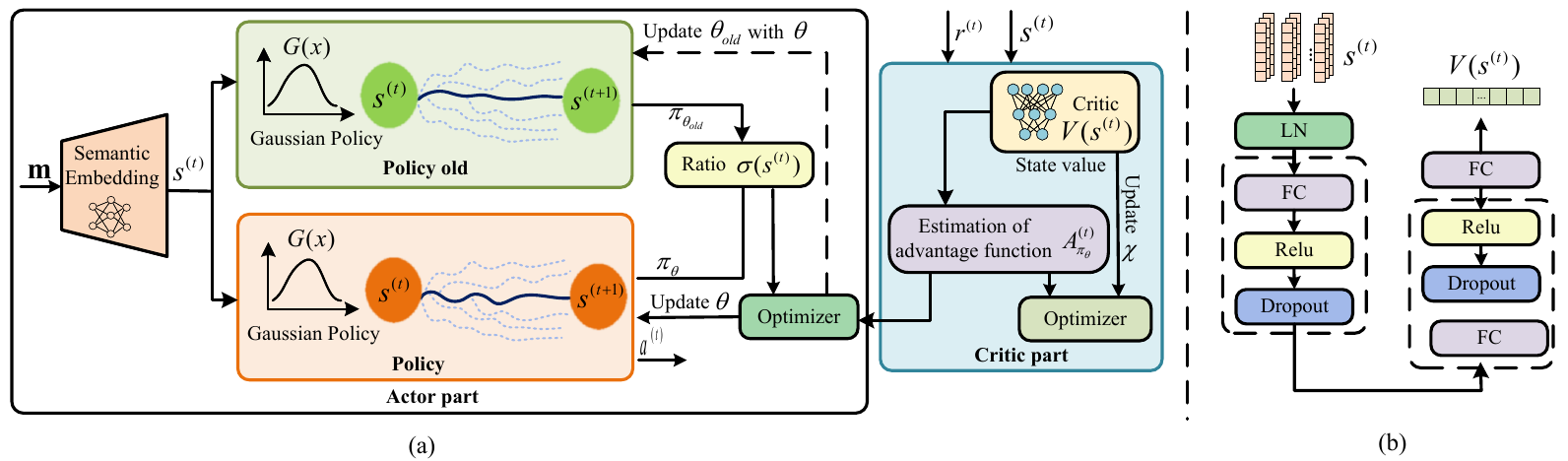}\\ 
\caption{The encoder learning process via weighted PPO algorithm. }
		\label{ppo} 
\end{figure*}

Moreover, to address the potential instability associated with training the encoder solely based on PPO’s policy loss, and to accelerate convergence and enhance the overall performance of our Semantic BC framework, we introduce an additional auxiliary decoder-side supervised loss. This additional loss provides explicit and stable task-specific supervision and serves to rectify the potentially suboptimal update direction induced by PPO. Thus, the overall encoder loss function is formulated as \eqref{eq: TX loss}.

\subsection{Optimization of SemanticBC-TriRL with Weight Assignment Module}     
\label{sec: Weight Assignment Module}

 In this section, building upon the definitions of the encoder and decoders, we introduce a weight assignment module to dynamically balance the task weights of encoder learning in \eqref{eq: optimization problem2} under a constrained function. 
 Before delving further, we first provide a more detailed clarification of problem \eqref{eq: optimization problem2}. Specially, at the first level, as a default setting in each iteration, given the encoder parameter $\theta_l$, the decoder parameters $\phi_n^*$ are iteratively optimized after $\kappa$ steps of local updates.
 Consequently, to simplify the notation, at $l$-th iteration, we define joint optimization variable as $v_l \equiv (\{ w_{n,l}\}_{n=1}^{N},\theta_l)$. Thus, in \eqref{eq: optimization problem2}, the third level $F(v_l)=\mathcal{L}_{\rm{TX}}(v_l;\{\phi_{n,l}^*\}_{n=1}^N)$ governs the joint optimization of the encoder parameters $\theta_l$ and the weight assignment module $\{w_n\}_{n=1}^N$, while the second level function $\tilde g(v_l)$ enforces the corresponding constraints. In this sense, 
 the encoder adaptively updates the shared encoder parameters, thereby alleviating potential conflicts across multiple tasks. Meanwhile, the second level constraint function $\tilde g(v_l)$ aims to approximate the optimal encoder parameter $\theta_l$ by performing $H$ steps of local gradient descent on the encoder parameters. 
 Intuitively, the learning rate of $F(v_l)$ should be properly controlled by constraint function $\tilde g(v_l)$, so that the complicated relationship between $\theta$ and $w_n$ can be accordingly addressed to prevent the excessive sacrifice of certain tasks during the optimization process, ensuring a more balanced and equitable improvement. 
 In this way, at $l$-th iteration, we assume $v_{l+1} = v_l + \eta d_l$, where $d_l$ is the update direction of $v_l$. Thus, we aim to find $d_l$ to simultaneously minimize third level objective function $F(v_l)$ and second level constraint function $\tilde g(v_l)$. 
 
 Furthermore, to reformulate problem \eqref{eq: optimization problem2} into a tractable optimization framework and ensure that all objectives are decreased as much as possible at each iteration, we follow the approach inspired by \cite{gong2021bi, ye2024first}.  In particular, to decrease $F(v_l)$ at $l$-th iteration, we consider its first-order Taylor approximation $F(v_l+\eta d_l) \approx F(v_l)+ \eta \langle \nabla F(v_l), d_l \rangle$. Hence, minimizing $F(v_l + \eta d_l)$ is equivalent to minimizing $\langle \nabla F(v_l), d_l \rangle$. Analogously, to decrease  $\tilde{g}(v_l)$, we have the same approximation $\tilde g(v_l+\eta d_l) \approx \tilde{g}(v_l)+ \eta \langle \nabla \tilde{g}(v_l), d_l \rangle$. To ensure $\tilde g(v_l+\eta d_l) < \tilde{g}(v_l)$, we impose the constraint $\langle \nabla \tilde{g}(v_l), d_l \rangle \le -\rho_l$, where $\rho_l$ is a non-negative control barrier and should be strictly positive $\rho_l>0$ \cite{gong2021bi, ye2024first}. Here, to ensure stricter constraint and the update meaningfully improves feasibility, we set $\rho_l= \beta\|\nabla\tilde{g}(v_l)\|^2$ by default \cite{gong2021bi}.
 In this regard, the problem \eqref{eq: optimization problem2} can be addressed to find a descent direction $d_l$ for $v_l=[\{w_{n,l}\}_{n=1}^N,\theta_l]$ by solving the following problem 
\begin{align}
\label{eq: optimization problem3}
d_l=\arg\min_{d}\langle \nabla F(v_{l}),d\rangle +\frac{1}{2}\|d\|^{2} \ \text{s.t.} \ {\langle\nabla\tilde{g}(v_l)},d \rangle \le -\rho_l
\end{align}
where $\frac{1}{2}\|d\|^{2}$ is a regulation term. 
In this way, $F(v_l)$ is minimized while the constraint function $\tilde{g}(v)$ is progressively satisfied. To solve this constrained optimization, we adopt the Lagrange multiplier method, yielding:
\begin{align}
\label{eq: Lagrange function}
   &\mathcal{L}(d,\lambda_l) \nonumber\\
   &=\langle \nabla F(v_{l}),d\rangle+\frac{1}{2}\|d\|^2+\lambda_l\left(\nabla\tilde{g}(v_l)^\top d+\beta\|\nabla\tilde{g}(v_l)\|^2\right) 
\end{align}
where $\lambda_l$ is the Lagrange multiplier and satisfies $\lambda_l >0$. 
Furthermore, the problem formulated in \eqref{eq: Lagrange function} yields a closed form solution: $d_l=-\nabla F(v_l)-\lambda_l\nabla\tilde{g}(v_l)$ and $\lambda_l\ge \frac{\rho_l-\nabla \tilde g(v_l)^\top\nabla F(v_l)}{\|\nabla \tilde g(v_l)\|^2}=\max(\frac{\rho_l-\nabla \tilde g(v_l)^\top\nabla F(v_l)}{\|\nabla \tilde g(v_l)\|^2},0)$. Since we consider the expectation of the encoder loss over a batch in \eqref{eq: TX loss}, it is assumed that the clipping of a single sample does not impact the overall gradient computation.  
Thus, at $l$-th iteration, $F(v_l) = \mathcal{L}_{\rm{TX}}(v_l;\{\phi_{n,l}^*\}_{n=1}^N))$ and $\nabla_v F(v_l)=\left[\nabla_{\boldsymbol{w}} F(v_l), \nabla_{\theta} F(v_l)\right]$, where $\boldsymbol{w}=\{w_n\}_{n=1}^N \in \mathbb{R}^N$. Thus, we have \eqref{eq: gradient of w} and \eqref{eq: gradient of theta}.  
Analogously, the gradients of $\nabla \tilde g(v_l)$ is obtained by \eqref{eq: gradient of g(v)}. 
In this sense, the computed update direction $d_l$ can be leveraged to derive the gradient with respect to the $\{w_{n,l}\}_{n=1}^N$ and $\theta_l$, which are then used to update these parameters accordingly.

\begin{figure*}[ht]
\begin{subequations}
\begin{align}
    &\nabla_{\boldsymbol{w}} F(v_l) =\nabla_{\boldsymbol{w}} \mathcal{L}_{\rm{TX}}\left(v_l;\{\phi_{n,l}^*\}_{n=1}^N\right) = -E_{t \sim T}\left[\frac{\pi_{\theta_l}(a^{(t)})|s^{(t)})}{\pi_{\theta_{l,\mathrm{old}}}(a^{(t)}|s^{(t)})}\left(
\begin{array}
{c}\Theta_{\phi_{1,l}^{\kappa}}(\boldsymbol{m},\hat{\boldsymbol{m}}_{1}^{(t)})\\
\cdots \\ 
\Theta_{\phi_{N,l}^{\kappa}}(\boldsymbol{m},\hat{\boldsymbol{m}}_{n}^{(t)})
\end{array}\right)\right] + \left(
\begin{array}
{c}\nabla_{w_1}\mathcal{L}_{\rm{RX}_1}(\theta;w_1,\phi_1^*)\\
\cdots \\ 
\nabla_{w_N}\mathcal{L}_{\rm{RX}_N}(\theta;w_N,\phi_N^*)
\end{array}\right)\label{eq: gradient of w}\\
& \nabla_{\theta} F(v_l)=\nabla_{\theta} \mathcal{L}_{\rm{TX}}\left(v_l;\{\phi_{n,l}^*\}_{n=1}^N\right) \nonumber \\
& = -E_{t \sim T}\left[\frac{\nabla\pi_{\theta_l}(a^{(t)})|s^{(t)})}{\pi_{\theta_{l,\mathrm{old} }}(a^{(t)}|s^{(t)})}\left(\sum_{n=1}^N w_{n,l} \Theta _{ \phi_{n,l}^{\kappa}}(\boldsymbol{m},{{\hat{\boldsymbol m}}_n^{(t)}}) -V(s^{(t)})\right)\right] +  \sum_{n=1}^{N} w_n \nabla_\theta \mathcal{L}_{\rm{RX}_n} \left(\theta;\{w_n\}_{n=1}^N, \{\phi_n^*\}_{n=1}^N\right)
\label{eq: gradient of theta}
\end{align} 
\end{subequations}
\hrulefill
\end{figure*}

\begin{figure*}
\begin{equation}
     \label{eq: gradient of g(v)}
     \nabla \tilde g(v_l)= \left[\nabla_v F(v_l) - \nabla_{\boldsymbol{w}}\mathcal{L}_{\rm{TX}}\left(\tilde \theta_l^H;\{ w_{n,l}\}_{n=1}^{N}, \phi_{n,l}^*\right) \right],
\end{equation}
\hrulefill
\end{figure*}

\noindent \textbf{Remark:} Thus, building upon the decoders' local updates and PPO-driven optimization of the encoder side, the proposed \texttt{SemanticBC-TriRL}, facilitated by alternate learning within a tri-level structure, enables adaptive semantic encoding to avoid possible conflicts among multiple tasks in a bottom-up manner. This adaptability is achieved by dynamically adjusting the weight assignment to different decoders, modeled as a quadratic programming problem in \eqref{eq: optimization problem3}. In particular, as illustrated in Fig. \ref{fig: alternate learning}, in a \textit{update cycle} at $l$-th iteration, the \rxs~ locally update their parameters for $\kappa$ iterations with fixed encoder parameter $\theta_l$.  
Further expand the optimization process of \tx~at the second level, \tx~ updates once and its optimization aims to satisfy the constraints function $g(v_l)$ with given decoders' parameters $\{\phi _{n,l}^ \kappa \}_{n=1}^{N}$ while the third level focuses on learning the proper weight $\{w_{n,l}\}_{n=1}^N$ to mitigate possible conflicts by utilizing the optimized $\{\phi _{n,l}^ * \}_{n=1}^{N}$. In this way, by introducing a straightforward weight assignment module, a trade-off solution can be achieved with relatively low computational cost. 
Finally, the detailed procedures for training \texttt{SemanticBC-TriRL} are summarized in \textbf{Algorithm } \ref{al: algorithm1}.

\begin{algorithm}[t]
   	\renewcommand{\algorithmiccomment}{\textbf{// }}
	\renewcommand{\algorithmicrequire}{\textbf{Input:}}
	\renewcommand{\algorithmicensure}{\textbf{Output:}}
   \caption{The training process of \texttt{SemanticBC-TriRL}.}   
   \label{al: algorithm1}
\begin{algorithmic}[1]
                \REQUIRE Learning rate ($\gamma$, $\eta$, $\varsigma$), number of iterations ($H, \kappa$),  input sequence ${\boldsymbol{m}}$, number of \rxs ~$N$. 
               \ENSURE Encoder parameter $\theta$, decoder parameter $\phi_n$. \\
               \FOR{epoch=$1: E_e$}  
                \FOR {$l=1:L$}
                \STATE $\tilde \theta_l \leftarrow \theta_l$;  
                \FOR {$n=1:N$}
               \FOR {$k=1:\kappa$}
               \STATE Sample a batch of data and update ${\phi _{n,l}^{k+1}}\leftarrow {\phi _{n,l}^k}- \gamma{ \nabla_{\phi _n^k}{L_{\rm{RX}_n}}({\phi _{n,l}^k};\{w_{n,l}\}_{n=1}^N,\theta_l) }$; 
                \ENDFOR\\
               \ENDFOR\\
               \FOR {$h=1:H$}
               \STATE Substituting $\mu_{l,t}=\frac{\pi_{\tilde \theta_l^h}(a^{(t)}|s^{(t)})}{\pi_{\tilde \theta_{l,\mathrm{old}}^h}(a^{(t)}|s^{(t)})}$ and $A^{(t)}_{\pi_{\theta_l^h}} = r_{\pi_{\theta_l^h}}^{(t)}-V(s^{(t)})$ into \eqref{eq: TX loss} yields $\mathcal{L}_{\rm TX}\left(\tilde \theta_l^h;\{w_{n,l}\}_{n=1}^N,{\phi _{n,l}^\kappa}\right)$;\\
               \STATE $\tilde \theta_l^{h+1} \leftarrow \tilde \theta_l^{h} - \gamma \nabla_{\theta} \mathcal{L}_{\rm TX}\left(\tilde \theta_l^h;\{w_{n,l}\}_{n=1}^N,{\phi _{n,l}^\kappa}\right)$; 
               \ENDFOR\\ 
               \STATE Compute $\tilde g(v_l)$ by \eqref{eq: estimated g(v)} ;\\
               \STATE Compute $\nabla \tilde g(v_l)$ by \eqref{eq: gradient of g(v)};\\
               \STATE Compute $\mathcal{L}_{\rm TX}\left(\theta_l;\{w_{n,l}\}_{n=1}^N,{\phi _{n,l}^\kappa}\right)$ by \eqref{eq: TX loss};\\
               \STATE Compute $\nabla F(v_l)$ by \eqref{eq: gradient of w} and \eqref{eq: gradient of theta};\\
               \STATE Compute $\lambda_l = \max(\frac{\rho_l-\nabla \tilde g(v_l)^\top\nabla F(v_l)}{\|\nabla \tilde g(v_l)\|^2},0)$ by \eqref{eq: Lagrange function};\\
               \STATE Set $d_l=-\nabla F(v_l)-\lambda_l\nabla\tilde{g}(v_l)$; \\
               \STATE Update $v_{+1}=[\{w_{n,l}\}_{n=1}^N, \theta_l]$ by $v_{l+1} \leftarrow v_l+ \eta d_l$; \\
               \STATE Update value function $\chi^{l+1} \leftarrow \chi_l - \varsigma \nabla L_{value}(\chi)$;\\
               \STATE $\theta_{l, \rm{old}} \leftarrow \theta_l$;\\
               \ENDFOR\\
		\ENDFOR\\
  \STATE 	\COMMENT{\textbf{Finish training}} 
  \STATE Return $v=[\{w_{n}\}_{n=1}^N, \theta]$ and $\phi_n$. 	
\end{algorithmic}  
\end{algorithm}

\subsection{Convergence Analysis}

In this section, we provide the convergence analysis of \texttt{SemanticBC-TriRL}.   
Beforehand, in order to better formulate the optimization between the encoder and the weight assignment module, we make the following assumptions.

\begin{assumption} [Smoothness]
    \label{assump: assumption1} 
 (1) For any given $v_l \equiv (\{ w_{n,l}\}_{n=1}^{N},\theta_l)$, $f(v_l)=\mathcal{L}_{\rm{TX}}\left(v_l; \phi_{n,l}^*\right)$, where $\theta_l$ is treated as the optimization variable, satisfies $c$-strongly convex with respect to $\theta_l$ (Assumption $2$ in \cite{ye2024first}). Given that both $\mu^{(t)}$ and $A^{(t)}=r_{\pi_\theta}^{(t)}-V(s^{(t)})$ are continuous and bounded in \eqref{eq: TX loss}, $\nabla f(v_l)$ is $L_f-$Lipschitz continuous, namely $\|\nabla f(v_1)-\nabla f(v_2)\| \le L_f\| v_1-v_2\|$.
 (2) The differentiable function $F(v_l) = \mathcal{L}_{\rm{TX}}(v_l;\{\phi_{n,l}^*\}_{n=1}^N))$, where $\{ w_{n,l}\}_{n=1}^{N}$ is treated as the optimization variable, also has an $L_F-$Lipschitz continuous gradient $\nabla F(v_l)$, namely $\|\nabla F(v_1)-\nabla F(v_2)\| \le L_F\| v_1-v_2\|$.

\end{assumption}  
\begin{assumption} [Boundedness of \cite{liu2022bome}] 
    \label{assump: assumption2} 
There exists a constant $M<\infty$, such that the $L_2$ norm of $\nabla F(v_l)$ is upper-bounded by $M$. 
\end{assumption} 

Notably, Assumption \ref{assump: assumption1} and Assumption \ref{assump: assumption2} are both widely used in standard optimization \cite{liu2022bome, ye2024first}. In our setting,  beyond the locally optimized decoder parameters $\phi_n^*$ at the first level, the key aspect lies in the application of optimization framework with the shared encoder and the weight assignment module. The learning processes associated with the third level $F(v_l)$ and the second level of $f(v_l)$ are naturally differentiable and bounded. 
Moreover, since the elements of the Hessian matrix of $\nabla f(v_l)$ are bounded, the condition for Lipschitz continuity is also satisfied. Assumption \ref{assump: assumption2} aims to ensure a stable update in a controlled manner.

We expect our proposed \texttt{SemanticBC-TriRL} can find Karush-Kuhn-Tucker (KKT) stationary point to solve the optimization problem \eqref{eq: optimization problem3}. We first give the KKT condition of \eqref{eq: optimization problem3} as Lemma \ref{lem: lemma1}. 
\begin{lemma}[KKT condition of \texttt{SemanticBC-TriRL}]
\label{lem: lemma1}
For $\boldsymbol{w} \in \Delta^{N-1}$ and $\lambda_l >0$, at a local optimal point $v^*$, the following three conditions hold. 
(1) Stationarity: $-\nabla F(v^*)-\lambda_l\nabla\tilde{g}(v^*)=0$; (2)
Primal feasibility: $g(v^*) \le 0$; (3) Complementary slackness: $\lambda_l \cdot \left( \nabla g(v^*)^\top d - \rho_l \right) = 0$.
\end{lemma}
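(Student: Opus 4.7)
The plan is to derive the three KKT conditions by applying standard first-order optimality theory to the convex quadratic subproblem \eqref{eq: optimization problem3}, whose Lagrangian is given in \eqref{eq: Lagrange function}, and then argue that these conditions must persist at a local optimum $v^*$ of the outer tri-level problem \eqref{eq: optimization problem2}. The core observation is that \eqref{eq: optimization problem3} is a strongly convex quadratic in $d$ with a single linear inequality constraint, so (since $\rho_l = \beta\|\nabla\tilde g(v_l)\|^{2} > 0$ whenever $\nabla\tilde g(v_l) \neq 0$) Slater's condition holds, strong duality applies, and the KKT system is both necessary and sufficient for optimality of the subproblem in $d$.

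First, I would set $\partial \mathcal{L}/\partial d = 0$ in \eqref{eq: Lagrange function} to obtain $d + \nabla F(v_l) + \lambda_l \nabla \tilde g(v_l) = 0$, which reproduces the closed-form direction $d_l = -\nabla F(v_l) - \lambda_l \nabla\tilde g(v_l)$. At a local optimum $v^*$ of the outer problem no feasible descent direction can exist, forcing $d = 0$; this collapses the subproblem stationarity identity directly to condition (1), $-\nabla F(v^*) - \lambda_l \nabla\tilde g(v^*) = 0$. Next, I would handle primal feasibility by exploiting the per-step guarantee $\langle\nabla\tilde g(v_l), d_l\rangle \le -\rho_l < 0$: combined with the $L_F$-smoothness of $F$ and the Lipschitz continuity of $\nabla f$ supplied by Assumption~\ref{assump: assumption1}, a standard descent-lemma argument yields that $\{\tilde g(v_l)\}$ decreases up to a step-size-controlled error, driving the iterates into the feasible set in the limit. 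Taking $v_l \to v^*$ then gives $\tilde g(v^*) \le 0$, and invoking the strong convexity of $f$ in $\theta$ (Assumption~\ref{assump: assumption1}) to ensure $\tilde\theta_l^H \to \theta_l^*$ as $H \to \infty$ lifts this to $g(v^*) \le 0$, establishing condition (2). Complementary slackness is then the usual dichotomy: either the inequality is strict at $v^*$, so dual feasibility forces $\lambda_l = 0$, or it is tight, so $\nabla g(v^*)^\top d - \rho_l = 0$; in both cases the product in condition (3) vanishes.

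The main obstacle I anticipate is bridging the approximated constraint $\tilde g$ used inside the algorithmic subproblem with the exact $g$ appearing in the statement of the lemma. Concretely, one must bound $\|\nabla\tilde g(v_l) - \nabla g(v_l)\|$ by the inner-loop error $\|\tilde\theta_l^H - \theta_l^*(\{w_{n,l}\})\|$ and show this error can be driven below any tolerance by choosing $H$ large enough. Here the $c$-strong convexity of $f$ in $\theta$ from Assumption~\ref{assump: assumption1} yields the geometric contraction of the inner gradient descent, while Assumption~\ref{assump: assumption2} keeps $\|\nabla F(v_l)\|$ and hence the multiplier $\lambda_l$ controlled. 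The KKT bookkeeping itself is routine; the real technical burden lies in this surrogate-to-exact passage, ensuring that the KKT point of the inner QP with $\tilde g$ truly coincides, in the limit, with a KKT point of the original constrained problem in $g$.
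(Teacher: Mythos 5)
The paper does not actually prove Lemma~\ref{lem: lemma1}: it simply records the standard KKT system attached to the direction-finding subproblem \eqref{eq: optimization problem3} (stationarity read off from $\partial_d\mathcal{L}=0$ in \eqref{eq: Lagrange function}, feasibility of the original constraint, and complementary slackness), and immediately remarks that the value-function constraint $g(v)\le 0$ is ill-posed with no interior point, which is why only a weak stationarity measure $\psi(v_l)$ together with $g(v_l)$ is pursued afterwards in Theorem~\ref{thm: theorem1}. Your recovery of the closed-form direction $d_l=-\nabla F(v_l)-\lambda_l\nabla\tilde g(v_l)$ is consistent with the paper, but the rest of your plan attempts something the paper deliberately avoids, namely proving that these KKT conditions are \emph{necessary} at a local optimum of the outer problem \eqref{eq: optimization problem2}, and there the argument has genuine gaps.

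Concretely: (i) your Slater/strong-duality argument applies only to the QP in $d$, whose single linear constraint has an interior whenever $\nabla\tilde g(v_l)\neq 0$; it does not apply to the outer constraint $g(v)\le 0$, whose feasible set has empty interior (the very point the paper flags), so necessity of KKT at a local optimum of \eqref{eq: optimization problem2} would require a different constraint qualification (e.g., calmness), which you do not supply. (ii) The step ``no feasible descent direction exists, forcing $d=0$'' conflicts with the subproblem constraint $\langle\nabla\tilde g(v^*),d\rangle\le-\rho_l$ with $\rho_l=\beta\|\nabla\tilde g(v^*)\|^2$: $d=0$ is infeasible unless $\nabla\tilde g(v^*)=0$, and in that degenerate case your identity collapses to $\nabla F(v^*)=0$, which is not condition (1) in general. (iii) The surrogate-to-exact passage you invoke via $H\to\infty$ and $v_l\to v^*$ is not available in this framework: $H$ is a fixed finite inner-loop length in Algorithm~\ref{al: algorithm1}, and the discrepancy between $\tilde g$ and $g$ is only controlled approximately through Lemma~\ref{lem: lemma2} and Lemma~\ref{lem: lemma3}, which is precisely why Theorem~\ref{thm: theorem1} delivers a non-asymptotic bound with an $\mathcal{O}(\Gamma(H/2))$ floor rather than exact KKT optimality at a limit point. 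In short, your inner-QP bookkeeping is fine, but the lemma in the paper is a statement of the target optimality system (borrowed from standard KKT theory and \cite{ye2024first}), not a result obtained by the limiting and constraint-qualification arguments you propose, and those arguments as written would not go through.
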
 
Since the constraint function $g(v)$ of the encoder is ill-posed and does not have an interior point \cite{ye2024first}. Inspired by \cite{ye2024first}, we only consider the stationarity condition and primal feasibility condition to converge to a weak stationarity point. Specially, at $l$-th iteration, the stationarity condition is measured by $\psi(v_l)=\|-\nabla F(v_l)-\lambda_l\nabla{g}(v_l)\|^2$. Meanwhile, to satisfy the primal feasibility condition $g(v_l) \le 0$, $g(v_l)$ should decrease to $0$ as possible. Therefore, to analyze the convergence of our proposed \texttt{SemanticBC-TriRL}, we should focus on the convergence rates of $\psi(v_l)$ and $g(v_l)$ as $\mathcal{L}_{\rm TX}$ increases. We first have the following Lemma. 

\begin{lemma}[Proof of Theorem 1 of \cite{liu2022bome}] 
    \label{lem: lemma2}
    Under Assumption \ref{assump: assumption1}, there exists a constant $C$ and $H>C>0$, such that $\| \nabla\tilde g(v)\| \ge C_a \|\nabla g(v)\|$, where $C_a=\frac{1}{2}$. 
\end{lemma}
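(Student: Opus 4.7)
The plan is to establish $\|\nabla\tilde g(v)\|\ge\tfrac12\|\nabla g(v)\|$ in three moves: (i) use strong convexity to obtain geometric convergence of the inner gradient descent, (ii) lift this into a bound on $\|\nabla g(v)-\nabla\tilde g(v)\|$, and (iii) apply a reverse triangle inequality with $H$ chosen above a suitable threshold $C$. The key structural observation is that $g(v)$ and $\tilde g(v)$ share their first term $\mathcal{L}_{\rm TX}(\theta;\{w_n\},\phi_n^*)$ and differ only in the reference point used in the second term: $\theta^*(\{w_{n,l}\}_{n=1}^N)$ for $g$ versus the $H$-step iterate $\tilde\theta_l^H(\{w_{n,l}\}_{n=1}^N)$ for $\tilde g$. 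Hence every discrepancy between $\nabla g$ and $\nabla\tilde g$ is ultimately controlled by $\|\theta^*-\tilde\theta_l^H\|$.

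First, I would combine the $c$-strong convexity of $\mathcal{L}_{\rm TX}(\cdot;\{w_n\},\phi_n^*)$ in $\theta$ with the $L_f$-Lipschitz continuity of $\nabla f$ from Assumption~\ref{assump: assumption1} to obtain the standard linear rate $\|\tilde\theta_l^H-\theta^*\|\le\alpha^H\|\tilde\theta_l^0-\theta^*\|$ for a contraction factor $\alpha\in(0,1)$ determined by $c$, $L_f$, and the inner step size $\gamma$. Together with Assumption~\ref{assump: assumption2} and the bounded initialization $\tilde\theta_l^0=\theta_l$, this yields a uniform constant $D$ with $\|\tilde\theta_l^H-\theta^*\|\le D\alpha^H$. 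Translating into the gradient domain via the closed form in \eqref{eq: gradient of g(v)} and its exact $\theta^*$-analogue, the difference $\nabla g(v)-\nabla\tilde g(v)$ collapses to $\nabla_{\boldsymbol{w}}\mathcal{L}_{\rm TX}(\tilde\theta_l^H;\cdot)-\nabla_{\boldsymbol{w}}\mathcal{L}_{\rm TX}(\theta^*;\cdot)$, which is bounded by $L_F\|\tilde\theta_l^H-\theta^*\|\le L_FD\alpha^H$ using the $L_F$-Lipschitz continuity of $\nabla F$. A reverse triangle inequality then gives $\|\nabla\tilde g(v)\|\ge\|\nabla g(v)\|-L_FD\alpha^H$, and picking $C$ with $L_FD\alpha^H\le\tfrac12\|\nabla g(v)\|$ for every $H>C$ delivers $C_a=\tfrac12$.

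The main obstacle I anticipate is the second step. One must argue that the implicit dependence of $\tilde\theta_l^H$ on $\{w_{n,l}\}_{n=1}^N$ through the unrolled $H$ gradient iterates does not spoil the bound: the proof exploits the fact that \eqref{eq: gradient of g(v)} records a partial (stop-gradient) derivative rather than a total one, so no Jacobian-through-unrolling terms enter and the residual is genuinely controlled by $\|\tilde\theta_l^H-\theta^*\|$ alone. A secondary subtlety is ensuring that the threshold $C$ can be chosen uniformly across all $v$ encountered along the algorithm's trajectory, which relies on the uniform boundedness of $\|\tilde\theta_l^0-\theta^*\|$ provided by Assumption~\ref{assump: assumption2} and the bounded-iterate structure enforced by the outer optimization.
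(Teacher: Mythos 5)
Your skeleton — geometric convergence of the $H$ inner gradient steps under $c$-strong convexity, transfer to the gradient domain via Lipschitz continuity (correctly noting that the stop-gradient form of $\nabla\tilde g$ in \eqref{eq: gradient of g(v)} confines the discrepancy to the $\boldsymbol{w}$-block, with the envelope theorem handling $\theta_l^*(\{w_{n,l}\}_{n=1}^N)$ on the exact side), then a reverse triangle inequality — is the same as in the argument the paper imports from the proof of Theorem 1 of \cite{liu2022bome}. The gap is in your final step. You bound the initialization error by an absolute constant, $\|\tilde\theta_l^H-\theta_l^*\|\le D\alpha^H$, so your gradient error is the additive quantity $L_F D\alpha^H$, and you then ask for a threshold $C$ such that $L_F D\alpha^H\le\tfrac12\|\nabla g(v)\|$ for all $H>C$. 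No such $v$-uniform $C$ exists: $\|\nabla g(v)\|$ is not bounded away from zero — on the contrary, Theorem \ref{thm: theorem1} invokes the lemma exactly in the regime where $g(v_l)$ and its gradient are driven toward zero — so for any fixed $H$ the requested inequality fails wherever $\|\nabla g(v)\|<2L_F D\alpha^H$. Assumption \ref{assump: assumption2} bounds $\|\nabla F\|$ from above; it gives no lower bound on $\|\nabla g\|$ and cannot rescue this step.

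The missing idea is to make the approximation error \emph{relative} to $\|\nabla g(v)\|$ rather than absolute. Because the inner recursion is initialized at $\tilde\theta_l^0=\theta_l$, the $c$-strong-convexity error bound gives $\|\theta_l-\theta_l^*(\{w_{n,l}\}_{n=1}^N)\|\le \tfrac1c\|\nabla_\theta\mathcal{L}_{\rm TX}(\theta_l;\{w_{n,l}\}_{n=1}^N,\phi_{n,l}^*)\|$, and this $\theta$-gradient is precisely the $\theta$-block of $\nabla g(v)$, hence $\|\theta_l-\theta_l^*\|\le\tfrac1c\|\nabla g(v)\|$. Chaining this with the linear rate and the Lipschitz step you already have yields the multiplicative bound $\|\nabla g(v)-\nabla\tilde g(v)\|\le \tfrac{L_F}{c}\,\alpha^H\|\nabla g(v)\|$, so choosing $C$ with $\tfrac{L_F}{c}\alpha^{C}\le\tfrac12$ gives $\|\nabla\tilde g(v)\|\ge\tfrac12\|\nabla g(v)\|$ for every $H>C$, uniformly in $v$ — which is the form of the estimate in \cite{liu2022bome} and removes both the need for a lower bound on $\|\nabla g\|$ and the detour through Assumption \ref{assump: assumption2}.
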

Lemma \ref{lem: lemma2} implies that the approximated $\|\nabla \tilde g(v)\|$ is usually small than $\|\nabla g(v)\|$. 
Moreover, we have the upper bound of $g(v_l)$ and $\sum_{l=0}^{L-1}||\nabla g(v_l)||^2$ by the following Lemma. 

\begin{lemma}[Lemma 10 in \cite{liu2022bome}, Lemma B.2(a) and Lemma Lemma B.2(d) of \cite{ye2024first}] 
    \label{lem: lemma3}
    Under assumption \ref{assump: assumption1}, suppose $g(v_0) \le B$ ($B$ is a positive constant), we have $g(v_l) \le \Gamma_1(l)g(v_0) + \Delta \le \Gamma_1(l)B + \Delta$ and $\sum_{l=0}^{L-1}||\nabla \tilde g(v_l)||^2 \le \mathcal{O}\left(\frac{1}{\eta}+ L \Delta \right)$, where $\Delta=\mathcal{O}(\Gamma(H))+ \eta$, and $\Gamma(H)$ (resp. $\Gamma_1(l)$) is an exponential decay with respect to $H$ (resp. $l$). 
\end{lemma}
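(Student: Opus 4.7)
The plan is to adapt the proof techniques of Lemma 10 in \cite{liu2022bome} and Lemma B.2 of \cite{ye2024first} to the tri-level alternating structure used here. My starting point is a standard smoothness-based descent inequality for $g$: because $f(v)=\mathcal{L}_{\rm TX}(v;\phi_{n,l}^*)$ is $c$-strongly convex in $\theta$ with $L_f$-Lipschitz gradient by Assumption~\ref{assump: assumption1}, and $g(v)$ is $f$ minus its value at the inner optimizer, $\nabla g$ is also Lipschitz continuous with some constant $L_g$. This yields
\begin{equation*}
g(v_{l+1}) \le g(v_l) + \eta \langle \nabla g(v_l),\, d_l\rangle + \tfrac{L_g\eta^2}{2}\|d_l\|^2.
\end{equation*}

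I would then decompose $\langle \nabla g(v_l), d_l\rangle = \langle \nabla \tilde g(v_l), d_l\rangle + \langle \nabla g(v_l) - \nabla \tilde g(v_l), d_l\rangle$. The first term is upper-bounded by $-\rho_l = -\beta\|\nabla\tilde g(v_l)\|^2$ via the feasibility constraint of \eqref{eq: optimization problem3}. For the second term, the $c$-strong convexity of $f$ implies that the $H$-step inner iterate $\tilde\theta_l^H$ satisfies $\|\tilde\theta_l^H - \theta_l^*\|^2 \le (1-\eta c)^{H}\|\tilde\theta_l^0-\theta_l^*\|^2$, hence $\|\nabla g(v_l)-\nabla\tilde g(v_l)\| \le \mathcal{O}(\Gamma(H))$ with $\Gamma(H)$ decaying exponentially in $H$. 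Using Lemma~\ref{lem: lemma2} to relate $\|\nabla\tilde g(v_l)\|$ to $\|\nabla g(v_l)\|$, Assumption~\ref{assump: assumption2} to bound $\|\nabla F(v_l)\|\le M$ and therefore $\|d_l\|$, and the PL-type inequality $\|\nabla g(v_l)\|^2 \ge 2c\, g(v_l)$ that follows from strong convexity, I expect the descent inequality to reduce to a one-step contraction of the form
\begin{equation*}
g(v_{l+1}) \le (1-c_1\eta)\, g(v_l) + c_2\bigl(\eta\,\Gamma(H) + \eta^2\bigr),
\end{equation*}
for positive constants $c_1,c_2$. Iterating this recursion from $l=0$ and summing the geometric tail delivers $g(v_l)\le \Gamma_1(l)g(v_0)+\Delta$ with $\Gamma_1(l)=(1-c_1\eta)^l$ and $\Delta = \mathcal{O}(\Gamma(H))+\eta$, proving the first claim.

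For the sum bound, I would rearrange the same descent inequality to isolate $\beta\|\nabla\tilde g(v_l)\|^2$ on the right-hand side and telescope from $l=0$ to $L-1$. The telescoping sum collapses to $g(v_0)-g(v_L)$, which the first part of the lemma bounds uniformly by $\Gamma_1(0)B+\Delta$. Absorbing the quadratic remainder $\tfrac{L_g\eta^2}{2}\|d_l\|^2$, which is again uniformly bounded via Assumption~\ref{assump: assumption2}, and dividing through by $\beta\eta$, I obtain $\sum_{l=0}^{L-1}\|\nabla\tilde g(v_l)\|^2 = \mathcal{O}(1/\eta + L\Delta)$, matching the target rate.

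The hard part, I expect, will be controlling the cross term $\langle \nabla g(v_l)-\nabla\tilde g(v_l),\, d_l\rangle$: the inner-loop approximation error $\Gamma(H)$ must be balanced against both the step size $\eta$ and the potentially large multiplier $\lambda_l$, since a crude bound would destroy the contraction factor in front of $g(v_l)$. Ensuring that $\lambda_l=\max\!\bigl(\tfrac{\rho_l-\nabla\tilde g(v_l)^\top\nabla F(v_l)}{\|\nabla\tilde g(v_l)\|^2},0\bigr)$ remains uniformly bounded, which forces $\|\nabla\tilde g(v_l)\|$ to be away from zero except when $g(v_l)$ is already small, is the delicate point where Assumption~\ref{assump: assumption2} and Lemma~\ref{lem: lemma2} must be combined carefully, and where the choice of $\beta$ that couples $\rho_l$ to $\|\nabla\tilde g(v_l)\|^2$ becomes essential for closing the recursion.
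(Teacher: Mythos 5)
The paper never proves Lemma~\ref{lem: lemma3}: it is imported wholesale, by citation, from Lemma~10 of \cite{liu2022bome} and Lemma~B.2 of \cite{ye2024first}, so there is no in-paper argument to compare against. Your reconstruction follows essentially the same route as those references, and its skeleton is sound: smoothness of the value-function gap $g$ (inherited from the $c$-strong convexity and $L_f$-Lipschitz gradient of the inner problem), the one-step descent inequality, the constraint $\langle\nabla\tilde g(v_l),d_l\rangle\le-\rho_l$ which indeed holds in both branches of the $\max$ defining $\lambda_l$, the exponentially small inner-loop error $\|\nabla g-\nabla\tilde g\|=\mathcal{O}(\Gamma(H))$ from the geometric contraction of the $H$-step inner updates, the PL-type inequality $\|\nabla g\|^2\ge 2c\,g$ combined with Lemma~\ref{lem: lemma2} to turn $-\eta\rho_l$ into a contraction factor on $g(v_l)$, and finally the recursion for part~(a) and the telescoped, rearranged descent inequality for part~(d) (for which $g(v_0)\le B$ suffices directly; you do not need part~(a) there). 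The one point you flag as delicate is indeed the only real gap in what is written: bounding $\|d_l\|\le\|\nabla F(v_l)\|+\lambda_l\|\nabla\tilde g(v_l)\|\le 2M+\beta\|\nabla\tilde g(v_l)\|$ controls the multiplier term without needing $\|\nabla\tilde g\|$ bounded away from zero, but it still requires an a priori bound on $\|\nabla\tilde g(v_l)\|$ (equivalently, Lipschitz continuity of the transmitter loss in $v$), which Assumption~\ref{assump: assumption2} alone does not supply since it bounds only $\nabla F$; the cited works assume this boundedness explicitly, and your proof should state it as an additional hypothesis rather than hope it falls out of the $\beta$-coupling of $\rho_l$.
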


Based on the above definitions, it is ready to analyze the convergent rate 
of our encoder optimization iterated by alternate learning, as shown in Theorem \ref{thm: theorem1}. 
\begin{theorem}
    \label{thm: theorem1}
    Under Assumption \ref{assump: assumption1} and Assumption \ref{assump: assumption2}, if $\nabla g(v_0)\le B$ ($B$ is a positive constant), there exists a $C>0$, when $H>C$, for any $L>0$, we have 
    \begin{align}
        \label{eq: theorem1}
    \max \limits_{l\le L}\left\{ \min \limits_{l\le L}\psi(v_l), g(v_l)\right\}=\mathcal{O}\left( \Gamma\left(\frac{H}{2}\right) +\sqrt{\eta} + \sqrt{\frac{1}{L\eta}} \right)
    \end{align}
\end{theorem}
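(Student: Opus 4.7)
The plan is to combine the $L_F$-smoothness of $F$ (Assumption~\ref{assump: assumption1}) with the bounds on the surrogate constraint from Lemma~\ref{lem: lemma3} to obtain a joint rate on the stationarity measure $\psi(v_l)$ and the feasibility gap $g(v_l)$. First, I would use smoothness to write the one-step descent inequality
\begin{equation*}
F(v_{l+1}) \le F(v_l) + \eta \langle \nabla F(v_l), d_l \rangle + \frac{L_F \eta^2}{2} \|d_l\|^2,
\end{equation*}
and then substitute $d_l = -\nabla F(v_l) - \lambda_l \nabla \tilde g(v_l)$ to obtain
$\langle \nabla F(v_l), d_l\rangle = -\|\nabla F(v_l)\|^2 - \lambda_l \langle \nabla F(v_l), \nabla \tilde g(v_l)\rangle$.
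Invoking the closed form $\lambda_l = \max\bigl(\tfrac{\rho_l - \nabla \tilde g(v_l)^\top \nabla F(v_l)}{\|\nabla \tilde g(v_l)\|^2},0\bigr)$ together with $\rho_l = \beta\|\nabla\tilde g(v_l)\|^2$ lets me absorb the cross term into a quantity of the form $-\|d_l\|^2 + O(\beta\|\nabla\tilde g(v_l)\|^2)$, yielding a clean descent estimate
\begin{equation*}
F(v_{l+1}) \le F(v_l) - c_1 \eta \|d_l\|^2 + c_2 \eta \|\nabla \tilde g(v_l)\|^2 + \tfrac{L_F\eta^2}{2}\|d_l\|^2,
\end{equation*}
valid provided $\eta$ is chosen small enough relative to $L_F$.

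Second, I would telescope the descent inequality over $l = 0,\dots,L-1$. Because $F$ is bounded (Assumption~\ref{assump: assumption2} plus Assumption~\ref{assump: assumption1} give $|F(v_0)-F(v_L)| \le ML\eta$ along the trajectory, and in fact $F$ is itself bounded on any compact sublevel set), this gives
\begin{equation*}
\sum_{l=0}^{L-1} \|d_l\|^2 \le \mathcal{O}\!\left(\frac{1}{\eta}\right) + \mathcal{O}\!\left(\sum_{l=0}^{L-1}\|\nabla\tilde g(v_l)\|^2\right).
\end{equation*}
Plugging in Lemma~\ref{lem: lemma3}, which bounds $\sum_{l}\|\nabla \tilde g(v_l)\|^2 = \mathcal{O}(1/\eta + L\Delta)$ with $\Delta = \mathcal{O}(\Gamma(H)) + \eta$, yields $\min_{l\le L}\|d_l\|^2 = \mathcal{O}(1/(L\eta) + \Gamma(H) + \eta)$. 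To convert $\|d_l\|^2$ into the desired stationarity measure $\psi(v_l) = \|\nabla F(v_l) + \lambda_l \nabla g(v_l)\|^2$, I would write $\psi(v_l) = \|d_l + \lambda_l(\nabla\tilde g(v_l)-\nabla g(v_l))\|^2 \le 2\|d_l\|^2 + 2\lambda_l^2 \|\nabla \tilde g(v_l) - \nabla g(v_l)\|^2$, and control the approximation error $\|\nabla\tilde g(v_l)-\nabla g(v_l)\|$ by $\mathcal{O}(\Gamma(H))$ using the $c$-strong convexity of $f$ in $\theta$ (Assumption~\ref{assump: assumption1}(1)), which guarantees exponential convergence of $\tilde\theta^H$ to $\theta^*$. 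Assumption~\ref{assump: assumption2} together with Lemma~\ref{lem: lemma2} keeps $\lambda_l$ bounded, so this error term is absorbed into the $\Gamma(H/2)$ contribution.

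Third, for the primal feasibility part, I would directly invoke the first half of Lemma~\ref{lem: lemma3}: $g(v_l) \le \Gamma_1(l) B + \Delta$. Choosing $l$ large enough makes the geometric term $\Gamma_1(l) B$ negligible, and $\Delta = \mathcal{O}(\Gamma(H)) + \mathcal{O}(\eta)$ contributes the $\Gamma(H/2) + \sqrt{\eta}$ scaling after taking square roots of the dominant terms. Combining the two bounds by taking the maximum and then bounding $\min_{l\le L}\psi(v_l)$ produces
\begin{equation*}
\max_{l\le L}\Bigl\{\min_{l\le L}\psi(v_l),\, g(v_l)\Bigr\} = \mathcal{O}\!\left(\Gamma\!\left(\tfrac{H}{2}\right) + \sqrt{\eta} + \sqrt{\tfrac{1}{L\eta}}\right),
\end{equation*}
which is the claimed rate.

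The main obstacle will be the careful accounting of the Lagrange-multiplier cross term $\lambda_l \langle \nabla F(v_l), \nabla\tilde g(v_l)\rangle$ in the descent inequality, since $\lambda_l$ is itself a nontrivial function of $\nabla\tilde g(v_l)$ and $\nabla F(v_l)$ through the $\max$ operator. A secondary technical point is propagating the surrogate-vs-true gradient error $\|\nabla \tilde g - \nabla g\|$, which requires invoking the strong-convexity of $f$ in $\theta$ to get the exponential decay $\Gamma(H)$ and then taking the $\sqrt{\cdot}$ to arrive at $\Gamma(H/2)$ in the final bound; keeping the constants uniform across $l$ so that one can telescope cleanly is where most of the bookkeeping effort lies.
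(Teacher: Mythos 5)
Your route is genuinely different from the paper's, and it contains two gaps. For context: the paper never runs a descent argument on $F$ at all. It bounds $\psi(v_l)$ pointwise, via AM--GM, the bound $\|\nabla F(v_l)\|\le M$ from Assumption \ref{assump: assumption2}, the closed form of $\lambda_l$ with $\rho_l=\beta\|\nabla\tilde g(v_l)\|^2$, and Lemma \ref{lem: lemma2} to replace $\|\nabla g(v_l)\|$ by $2\|\nabla\tilde g(v_l)\|$, arriving at $\psi(v_l)\lesssim M^2+\beta^2\|\nabla\tilde g(v_l)\|^2+\beta M\|\nabla\tilde g(v_l)\|$; it then averages over $l$, applies Lemma \ref{lem: lemma3} to $\sum_l\|\nabla\tilde g(v_l)\|^2$ and Cauchy--Schwarz to $\sum_l\|\nabla\tilde g(v_l)\|$, and reads $g(v_l)\le \Gamma_1(l)B+\Delta$ directly off Lemma \ref{lem: lemma3}. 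Your step 1 is actually sound as far as it goes: complementary slackness gives $\langle\nabla F(v_l),d_l\rangle=-\|d_l\|^2+\lambda_l\rho_l\le-\|d_l\|^2+\beta^2\|\nabla\tilde g(v_l)\|^2+\beta M\|\nabla\tilde g(v_l)\|$, so the cross term can indeed be absorbed.

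The first gap is the telescoping budget. To get $\sum_{l}\|d_l\|^2\le\mathcal{O}(1/\eta)+\mathcal{O}(\sum_l\|\nabla\tilde g(v_l)\|^2)$ you need $F(v_0)-F(v_L)$ bounded by a constant independent of $L$, i.e.\ $F$ bounded below, which is not among Assumptions \ref{assump: assumption1}--\ref{assump: assumption2}; your substitute estimate $|F(v_0)-F(v_L)|\le ML\eta$ both presumes $\|d_l\|$ uniformly bounded (unproven, since $\lambda_l$ need not be bounded) and, even if granted, is useless: after dividing by $\eta$ and averaging over $L$ it leaves an $\mathcal{O}(M)$ term that does not decay, so $\min_l\|d_l\|^2$ would not tend to the claimed rate. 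The second gap is the conversion $\psi(v_l)\le 2\|d_l\|^2+2\lambda_l^2\|\nabla\tilde g(v_l)-\nabla g(v_l)\|^2$, which requires $\lambda_l$ itself to be bounded. From the closed form only the product $\lambda_l\|\nabla\tilde g(v_l)\|\le\beta\|\nabla\tilde g(v_l)\|+M$ is controlled; $\lambda_l$ alone scales like $M/\|\nabla\tilde g(v_l)\|$, and Lemma \ref{lem: lemma2} lower-bounds $\|\nabla\tilde g\|$ only relative to $\|\nabla g\|$, not away from zero, so "Assumption \ref{assump: assumption2} together with Lemma \ref{lem: lemma2} keeps $\lambda_l$ bounded" does not hold. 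The paper's pointwise bound is structured precisely to avoid this: it only ever uses $\lambda_l\|\nabla g(v_l)\|\le 2\lambda_l\|\nabla\tilde g(v_l)\|$. Your route also needs the separate surrogate-error estimate $\|\nabla\tilde g(v_l)-\nabla g(v_l)\|=\mathcal{O}(\Gamma(H))$, which is not provided by Lemmas \ref{lem: lemma2}--\ref{lem: lemma3} as stated and would have to be derived from the strong convexity and smoothness of the inner problem.
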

\begin{proof}

 Our goal is to measure the convergent rate from the stationary condition $\psi(v_l)=\|-\nabla  F(v_l)-\lambda_l\nabla{g}(v_l)\|^2$ and primal feasibility condition $g(v_l)$. For the stationary condition $\psi(v_l)$, we have \eqref{eq: stationary function}. Specifically, inequality (a) comes from the arithmetic mean-geometric mean (AM-GM) inequality, and for inequality (b), the first term is due to the Assumption \ref{assump: assumption2}. In terms of inequality (c), we substitute $\lambda_l\ge \max(\frac{\rho_l-\nabla \tilde g(v_l)^\top\nabla F(v_l)}{\|\nabla \tilde g(v_l)\|^2}, 0)$ and $\rho_l = \beta\|\nabla\tilde{g}(v_l)\|^2$ to it and obtain corresponding results. 
 Then,  inequality (d) can be obtained by applying the basic inequality $-\nabla \tilde g(v_l)\nabla F(v_l) \le \|\nabla \tilde g(v_l)\| \cdot \|\nabla F(v_l)\|$.  
  Finally, inequality (e) and (f) utilize the $\| \nabla\tilde g(v_l)\| \ge \frac{1}{2} \|\nabla g(v_l)\|$ in Lemma \ref{lem: lemma2} to derive the simplification. 

 \begin{figure*}[ht] 
     \begin{align}
\label{eq: stationary function}
  \psi(v_l)&=\left \|-\nabla F(v_l)-\lambda_l\nabla{g}(v_l)\right\|^2 \nonumber\\
   & \mathop  \le \limits^{(a)} 2\|\nabla F(v_l)\|^2 + 2\lambda_l^2 \left\| \nabla  g(v_l)\right\|^2 \nonumber\\
   & \mathop  \le \limits^{(b)} 2 M^2 + 2  \lambda_l^2 \| \nabla  g(v_l)\|^2 \nonumber\\
   &  \mathop  \le \limits^{(c)}2 M^2 + 2\left(\frac{\rho_l - \nabla \tilde g(v_l)^\top \nabla F(v_l) }{\| \nabla \tilde g(v_l)\|^2} \right)^2 \| \nabla  g(v_l)\|^2 \nonumber \\
   &  \mathop  \le \limits^{(d)} 2 M^2 + 2 \left(  \frac{\beta^2 \| \nabla \tilde g(v_l)\|^4 + \| \nabla \tilde g(v_l)\|^2 \|\nabla F(v_l)\|^2 + 2 \beta \|\nabla \tilde g(v_l)\|^2 \|\nabla \tilde g(v_l)\| \|\nabla F(v_l)\|}{\|\nabla \tilde g(v_l)\|^4}\right) \| \nabla  g(v_l)\|^2 \nonumber \\
   & \mathop  \le \limits^{(e)} 10 M^2 + 2\beta ^2 \|\nabla g(v_l)\|^2 +16 \beta M \|\nabla \tilde g(v_l) \|\nonumber \\ 
   & \mathop  \le \limits^{(f)} 10 M^2 + 8\beta ^2 \|\nabla \tilde g(v_l)\|^2 + 16 \beta M \|\nabla \tilde g(v_l) \|
\end{align} 
\hrulefill
 \end{figure*} 

For the second term of the right-hand side of \eqref{eq: stationary function}, according to Lemma \ref{lem: lemma3}, we have 
\begin{align}
    \label{eq: stationary function1 }
    \sum \limits_{l=0}^{L-1} \|  \nabla \tilde g(v_l)\|^2 &\le  \mathcal{O}\left(\frac{1}{\eta}+ L \Delta\right)\nonumber\\
    & =  \mathcal{O}\left( \frac{1}{\eta} + L\Gamma(H)+ L\eta \right)
\end{align}
Meanwhile, for the third term of \eqref{eq: stationary function}, combining Lemma \ref{lem: lemma3} with the Cauchy-Schwarz inequality, we obtain that 
\begin{align}
    \label{eq: stationary function2}
     \sum \limits_{l=0}^{L-1} \|\nabla \tilde g(v_l) \| &\le \sqrt{L} \left(\sum \limits_{l=0}^{L-1} \|\nabla \tilde g(v_l) \|^2\right)^{1/2} \nonumber\\
     &=\sqrt{L} \left(\mathcal{O}\left(\frac{1}{\eta}+L \Gamma(H)+ L\eta\right)\right)^{1/2}\nonumber\\
     &=\mathcal{O}\left(\sqrt{\frac{L}{\eta}}+L\Gamma\left(\frac{H}{2}\right)+L\sqrt{\eta}\right)  
\end{align} 
Along with \eqref{eq: stationary function}, we have  
\begin{align} 
    \label{eq: stationary function3} 
\min \limits_{l<L}\psi(v_l) \le \frac{1}{L}  \sum \limits_{l=0}^{L-1} \psi(v_l) =\mathcal{O} \left( \sqrt{\frac{1}{L\eta}} + \Gamma\left(\frac{H}{2}\right) +\sqrt{\eta} \right)
\end{align}
According to Lemma \ref{lem: lemma3}, we have $g(v_l)= \mathcal{O} \left ( \Gamma_1(l)B + \Gamma(H) + \eta \right)$. Finally, we obtain \eqref{eq: theorem1}.  
\end{proof}  

\noindent \textbf{Remark:} From Theorem \ref{thm: theorem1}, in alignment with stationary condition $\psi(v_l)$ and feasibility condition $g(v_l)$, \eqref{eq: theorem1} implies that the convergent rate will rapidly improve as $L$ and $H$ increase. As for \eqref{eq: theorem1}, inspired by \cite{liu2022bome}, when $\eta = \mathcal{O}(L^{-1/2})$, the convergent rate can be denoted as $\max \limits_{l\le L}\left\{ \min \limits_{l\le L}\psi(v_l), g(v_l)\right\}=\mathcal{O}\left( \Gamma\left(\frac{H}{2}\right)+L^{-1/4} \right)$.

\section{Simulation Results}
\label {sec: Simulation Results}

\subsection{Simulation Settings}
\label{sec: Simulation Settings}

In this section, we validate the superiority of \texttt{SemanticBC-TriRL} on two typical tasks, including both image reconstruction and image classification tasks. In terms of datasets, we choose the popular MNIST and CIFAR-10 datasets for training and testing, which are introduced as follows.
\begin{itemize}
    \item \textbf{MNIST}: MNIST\footnote{\url{http://yann.lecun.com/exdb/mnist}} is a classic dataset containing $70,000$ grayscale images of handwritten digits ranging from ``$0$'' to ``$9$''), each with a resolution of $ 28\times28$ pixels. The dataset is divided into a training set of $60,000$ samples and a testing set of $10,000$ samples.
    \item \textbf{CIFAR-$10$}: CIFAR-$10$ dataset\footnote{\url{https://www.cs.toronto.edu/~kriz/cifar.html}} containing $60, 000$ RGB images with the fixed size of $32\times32\times 3$ across $10$ distinct classes (\ie, airplane, automobile, bird). It is divided into a training set of $50,000$ images and a testing set of $10,000$ images. 
\end{itemize}

The other related parameters are listed in Table \ref{default parameters}.
\begin{table}[t]
		\centering
		\caption{The default parameters in \texttt{SemanticBC-TriRL}}      
		\label{default parameters}		  
		\def\arraystretch{1.0}  
    \begin{tabular}{m{1cm} m{4cm}m{2cm} } 	
			\hline
			Parameters &Description& Value\\  
                \hline
                $T$&Batch size& $64$ \\ 
                $\eta $&Learning rate of encoder and decoders& $1e-3$\\
                $\kappa$& Local iterations for decoders in each \textit{update cycle}&$100$\\ 
                $H$&No. of inner steps at \tx &$5$\\   
                $E_e$ &No. of training epochs &$71$\\
                $B$& The fixed bits for image transmission & $128$, $1024$, $5000$\\
                $\mathrm{CBR}$&Channel bandwidth ratio &$0.02$, $0.16$, $0.2$\\  
                \hline 
		\end{tabular}   
        \vspace{-0.35cm}    
\end{table}

\textbf{Evaluation Metrics:} For the image reconstruction task, we evaluate the performance of image transmission by widely used pixel-wise PSNR and perceptual metric SSIM. 
 Specifically, PSNR measures the ratio between the maximum possible power of the signal and the power of the noise that corrupts the signal \cite{bourtsoulatze2019deep}, which is defined as 
\begin{equation}
    \label{eq: psnr}
     \Theta_\text{PSNR} = 10\log_{10}\left(\frac{\text{MAX}^2}{\text{MSE}}\right)\quad(\text{dB})
\end{equation}
where $\text{MSE} = d(\boldsymbol{m}, \boldsymbol{\hat{m}_n})$ denotes the mean squared-error, and MAX is the maximum possible value of the image pixels, \ie, $255$ for $24$ bit depth RGB images \cite{tong2024alternate}. On the other hand, the SSIM score leverages the inter-dependencies of pixels and reveals the perception information to some extent. In this way, $\Theta_\text{SSIM}(\boldsymbol{m}, \boldsymbol{\hat{m}_n})$ compares $\boldsymbol{m}$ and $\boldsymbol{\hat{m}_n}$ in terms of the  \emph{luminance}, \emph{contrast} and \emph{structure}, and can be computed as 
\begin{align}
    \label{eq: ssim}
     &\Theta_\text{SSIM}(\boldsymbol{m}, \boldsymbol{\hat{m}_n}) \nonumber\\
     &= {(\rho_\text{l}(\boldsymbol{m}, \boldsymbol{\hat{m}_n})^{\lambda_1} } \cdot {(\rho_\text{c}(\boldsymbol{m}, \boldsymbol{\hat{m}_n})^{\lambda_2} } \cdot {(\rho_\text{s}(\boldsymbol{m},  \boldsymbol{\hat{m}_n}))^{\lambda_3} } \in [0,1], 
\end{align}
where the functions $\rho_\text{l}$, $\rho_\text{c}$ and $\rho_\text{s}$ reflect the perceived changes in terms of luminance, contrast, and structural patterns between two images. Meanwhile, $\lambda_1$, $\lambda_2$ and $\lambda_3$ denote the exponential coefficient \cite{lu2022semantics}. 

In addition, the classification accuracy is defined as the ratio of the number of correctly classified samples to the total number of samples. Besides, it is worth noting that we adopt the perceptual metric SSIM \cite{tong2024alternate} as the learning reward in  \eqref{eq: encoder reward} for data reconstruction, while for classification tasks, the classification accuracy is taken as a learning reward. Also, we randomly crop and flip the images for data augmentation.

\textbf{DNN structures}: As shown in Fig. \ref{system_model}, the number of \rxs~ is set to $N=2$ and the corresponding DNNs for our semantic BC model are specially designed for different datasets, which are summarized in Table \ref{tab: mnist-structure} and Table \ref{tab: cifar10-structure}. In particular, in terms of MNIST dataset, the default fixed length of bits for an image is set to $128$, that is, $\mathrm{CBR} =\frac{128}{8\times 28 \times 28}\approx 0.02$, and the case of $\mathrm{CBR} =\frac{1024}{8\times 28 \times 28}\approx 0.16$ is also simulated. For the CIFAR-$10$ dataset, we adopt two stages of Swin Transformer blocks with $[C_1, C_2, C]=[128,256,48]$-dimensional intermediate concatenated features and equivalently $\mathrm{CBR}=\frac{5000}{8\times 3\times 32 \times 32} \approx 0.2 $. 
Besides, the learning rate and batch size are default as $10^{-3}$ and $64$ for these two datasets, respectively. 

\begin{table}[htbp]
  \centering
  \caption{Structures of encoder and decoders for \texttt{SemanticBC-TriRL} on MNIST dataset}
  \renewcommand{\arraystretch}{1.4}
  \begin{tabular}{m{1.7cm} m{1.8cm}  m{1.8cm}  m{1.8cm} }  	 
    \hline
    \textbf{Modules} & \textbf{Layer} & \textbf{Output imensions} & \textbf{Activation}\\ 
    \hline
    Input & Image $\boldsymbol{m}$ & $(T,1,28,28)$ & $\backslash$ \\  
    \hline
    \multirow{4}{1.7cm}{Encoder}& Conv layers & $(T,64,7,7)$ & ReLU\\ 
     & Dense layers & $(T,128)$ & $\backslash$\\ 
    & Sample policy  &  $(T,128)$ & Gaussian policy \\ 
    & $Q_{\rm{TX}}(\cdot)$& $(T,128)$ & Binary\\
    \hline
    Channel & AWGN/Rayleigh/Rician & & \\ 
    \hline
    \multirow{3}{1.7cm}{Reconstruction Decoder} & $Q_{\rm{RX}_n}(\cdot)$ & $(T,64)$ & $\backslash$\\
    & Dense layers & $(T,64,7,7)$ & $\backslash$\\
    & conv layers & $(T,1,28,28)$ & ReLU, Sigmoid\\
    \hline
    \multirow{3}{1.7cm}{Classification Decoder} & $Q_{\rm{RX}_n}(\cdot)$ & $(T,64)$ & $\backslash$\\
    & Dense layer & $(T,32)$ & ReLU\\
    & Dense layer & $(T,10)$ & Softmax\\
    \hline
  \end{tabular}
  \label{tab: mnist-structure}
\end{table}

\begin{table}[htbp]
  \centering
  \caption{Structures of encoder and decoders for \texttt{SemanticBC-TriRL} on CIFAR-$10$ dataset}
  \renewcommand{\arraystretch}{1.4}
  \begin{tabular}{m{1.35cm} m{2.73cm}  m{1.5cm}  m{1.5cm}}  	 
    \hline
    \textbf{Modules} & \textbf{Layer} & \textbf{Dimensions} & \textbf{Activation}\\ 
    \hline
    Input & Image $\boldsymbol{m}$ & $(T,32,32,3)$ & $\backslash$ \\  
    \hline
        \multirow{5}{1.35cm}{Encoder}& Swin Transformer block & $(T,16,16,128)$ & $\backslash$\\ 
     & Swin Transformer block & $(T,8,8,256)$ & $\backslash$\\ 
     & Sample policy  &  $(T,128)$ & Gaussian policy \\ 
     & Dense layer & $(T,3072)$ &$\backslash$\\ 
    & $Q_{\rm{TX}}(\cdot)$& $(T,5000)$ & Binary\\ 
    \hline
    Channel & AWGN/Rayleigh/Rician & & \\ 
    \hline
        \multirow{3}{1.35cm}{Reconstruction Decoder} & $Q_{\rm{RX}_n}(\cdot)$ & $(T,3072)$ & $\backslash$\\
    & Swin Transformer block  & $(T,8,8,256)$ & $\backslash$\\
    & Swin Transformer block  & $(T,16,16,128)$ & ReLU, Sigmoid\\
    \hline
    \multirow{3}{1.35cm}{Classification Decoder} & $Q_{\rm{RX}_n}(\cdot)$ & $(T,3072)$ & $\backslash$\\
    & Dense layers & $(T,128)$ & ReLU\\
    & Dense layer & $(T,10)$ & Softmax\\
    \hline
  \end{tabular}
  \label{tab: cifar10-structure}
\end{table}

\textbf{Comparsion Schemes:} We compare the performance between \texttt{SemanticBC-TriRL} and the following schemes including
\begin{itemize}
    \item \textbf{JSCC}: JSCC share the same encoder and reconstruction decoder as deep JSCC in \cite{bourtsoulatze2019deep}. Meanwhile, we enhance the convolutional neural network (CNN)-based JSCC scheme to support multi-task semantic BC by adding an extra classification decoder, thus enabling both image reconstruction and classification tasks. 
    \item \textbf{Deep JSCC}: Deep JSCC \cite{lyu2024semantic} is an end-to-end semantic Communication model for image recovery and classification, and we extend it to one-to-many semantic BC scenario for comparision. 
    \item \textbf{SemanticBC-TriRL-EW}: SemanticBC-TriRL-EW has the same DNN structure as \texttt{SemanticBC-TriRL} but the loss weight of it for each task is always adopt equal weighting (EW) during training, \ie, $w_n=\frac{1}{N}$. 
    \item \textbf{BPG+LDPC}\cite{richardson2018design}: The BPG codec\footnote{https://github.com/def-/libbpg} offers efficient image compression, and the 5G LDPC codec\footnote{https://github.com/NVlabs/sionna/tree/main/sionna/fec/ldpc} provides efficient error correction by using a sparse matrix design for channel coding. Thus, we expand this communication mechanism to BC scenario, and follow the standard 5G LDPC codes with different coding rates and quadrature amplitude modulations (QAM). We use combination of $(3072,4608)$ and LDPC codes (which corresponding to $2/3$ rate codes) with $4$-QAM and $16$-QAM digital modulation schemes. 
     To accomplish the classification task, we design a CNN-based classifier for the MNIST dataset and utilize a pre-trained ResNet20 that is specifically fine-tuned for the CIFAR-10 dataset as the classifier. 
\end{itemize}
It is worth noting that,  for the sake of fair comparison, while the BPG+LDPC method employs a dedicated quantization approach, we also apply 1-bit quantization to other DNN-based comparison algorithms with default quantization resolutions as our method.

\subsection{Numerical Results and Analysis}
\label{sec: Numerical Results and Analysis}

\subsubsection{Comparative Analysis with Baselines}
\label{sec: Comparative Analysis With Baselines:}

\begin{figure*}[hbt]
    \centering
    \subfigure[AWGN]{
        \includegraphics[width=0.483\textwidth]{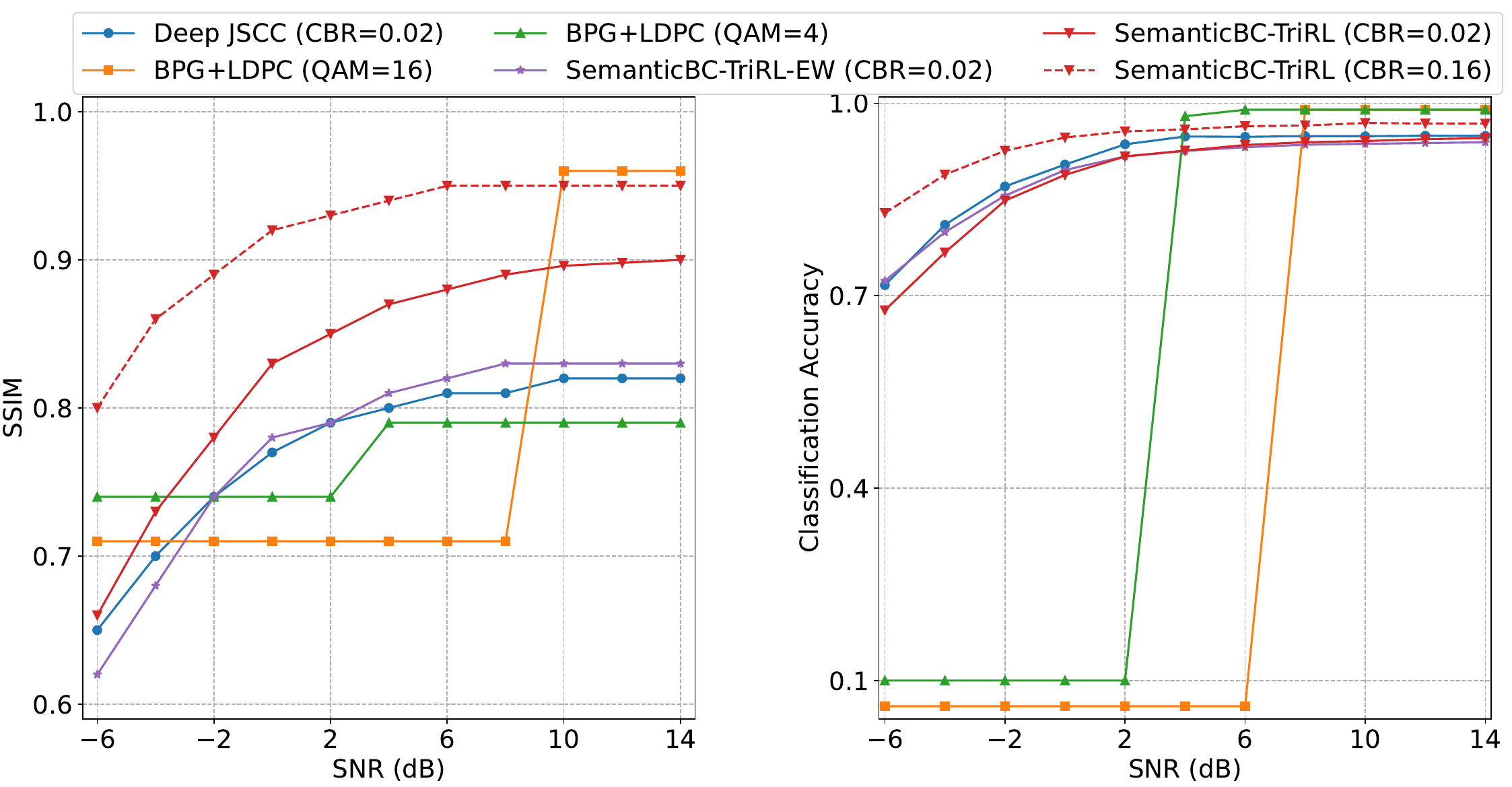}  
        \label{fig: mnist-awgn-ssim-acc}
    }
    \subfigure[Rayleigh fading]{
        \includegraphics[width=0.483\textwidth]{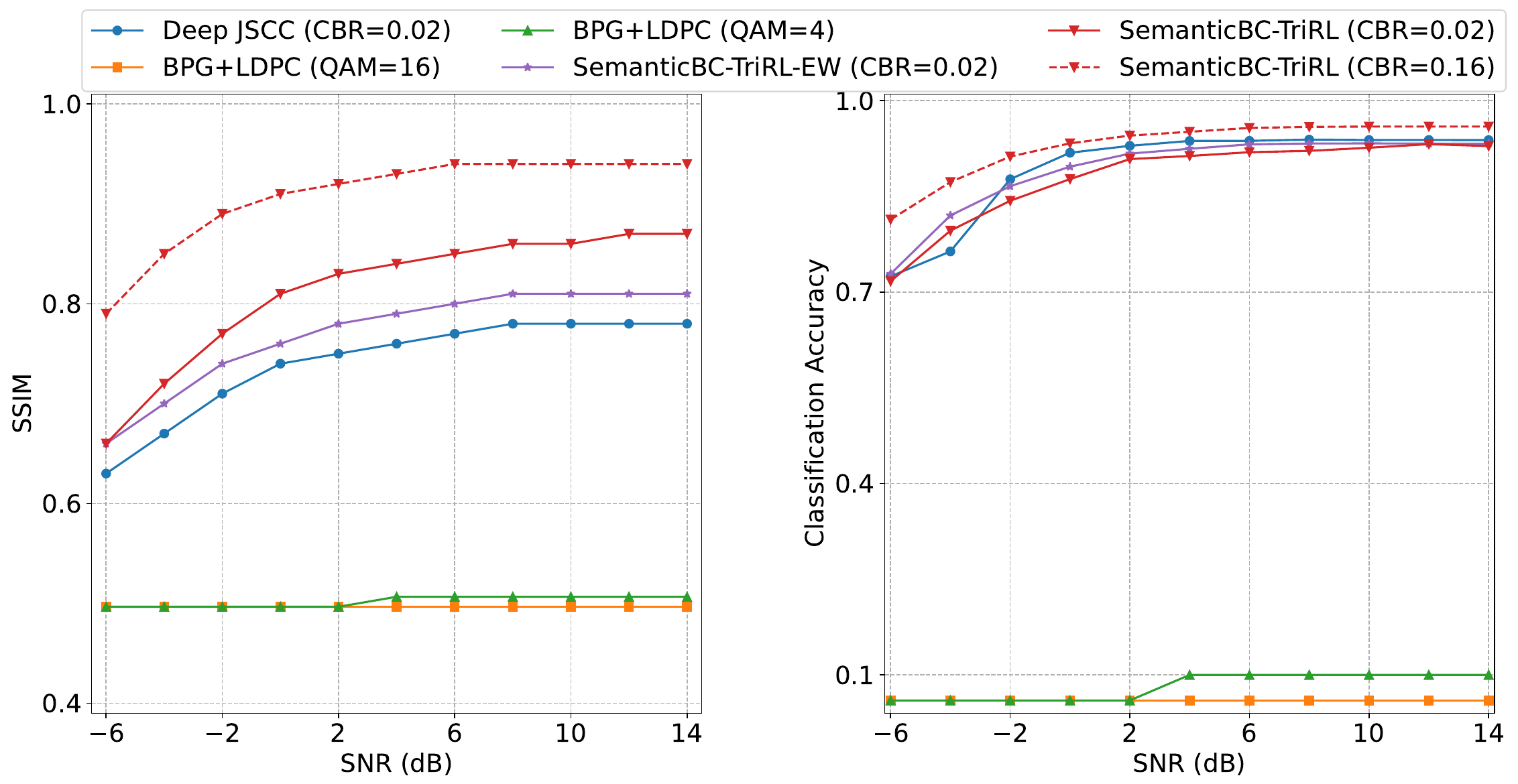}
        \label{fig: mnist-fading-ssim-acc}
    }    \caption{Comparison of semantic BC schemes in terms of SSIM and classification accuracy for MNIST image transmission under  AWGN and Rayleigh fading channels.}
    \label{fig: mnist-ssim-acc}
\end{figure*}

\begin{figure*}[hbt]
    \centering
    \subfigure[Original/(PSNR(dB) \& SSIM)]{
        \includegraphics[width=0.20\textwidth]{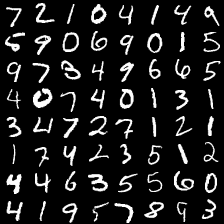} 
    }
    \subfigure[AWGN/($16.58$ dB \& $0.88$)]{
        \includegraphics[width=0.20\textwidth]{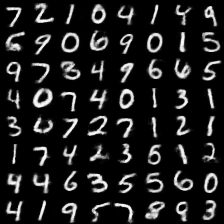}
    }
    \subfigure[Rayleigh/($15.67$ dB \& $0.85$)]{
        \includegraphics[width=0.20\textwidth]{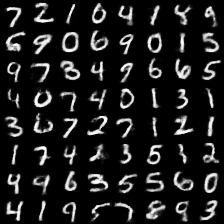} 
    }
    \subfigure[Rician/($15.80$ dB \& $0.86$)]{ 
        \includegraphics[width=0.20\textwidth]{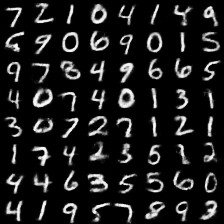}
    }
    \caption{Examples of visual comparison for MNIST image transmission through AWGN, Rayleigh, and Rician fading channels under $0$ dB with the proposed \texttt{SemanticBC-TriRL}.} 
    \label{fig: mnist-images}  
\end{figure*} 

\begin{figure*}[hbt]
    \centering
        \subfigure[BPG+LDPC (QAM = 4)/($9.05$ dB \& $0.71$))]{
        \includegraphics[width=0.20\textwidth]{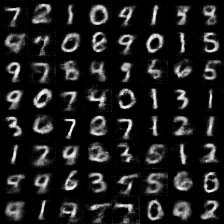} 
    }
    \subfigure[BPG+LDPC (QAM = 16)/($9.67$ dB \& $0.74$)]{
        \includegraphics[width=0.20\textwidth]{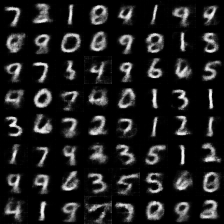} 
    }
    \subfigure[Deep JSCC/($14.46$ dB \& $0.77$)]{
        \includegraphics[width=0.20\textwidth]{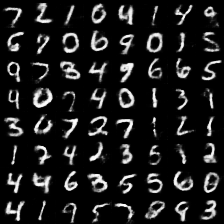} 
    }
    \subfigure[SemanticBC-TriRL-EW/($14.34$ dB \& $0.78$)]{
        \includegraphics[width=0.20\textwidth]{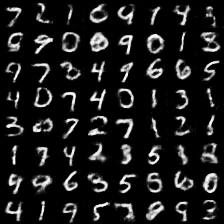}
    }    \caption{Examples of visual comparison for MNIST image transmission through AWGN channel under $0$ dB for comparsion schemes.}

    \label{fig: mnist-compare-images}
\end{figure*}

    We first evaluate the performance of the proposed \texttt{SemanticBC-TriRL} on the MNIST dataset, and compare it with several representative baselines, including ``Deep JSCC'', ``BPG+LDPC'' and ``SemanticBC-TriRL-EW'' under both AWGN and Rayleigh fading channels, as depicted in Fig. \ref{fig: mnist-ssim-acc}. From Fig. \ref{fig: mnist-awgn-ssim-acc}, it can be observed that under the AWGN channel,  \texttt{SemanticBC-TriRL} significantly outperforms other DNNs based methods such as Deep JSCC, SemanticBC-TriRL-EW, as well as the conventional coding scheme BPG+LDPC coding. In particular, for DNNs based semantic communication methods under a fixed quantization budget of $B=128$ bits (\ie, $\mathrm{CBR}= 0.02$), our method consistently achieves higher SSIM values and classification accuracies, especially in the low SNR regime ($\mathrm{SNR} < 8 $ dB), which is primarily attributed to the key semantic features can be more effectively encoded and transmitted to execute specific tasks. 
    
    Notably, the proposed \texttt{SemanticBC-TriRL} demonstrates a smooth and gradual performance degradation as SNR decreases, whereas the conventional ``BPG+LDPC'' scheme exhibits a ``cliff effect'' (characterized by a sudden drop in performance below a certain $\mathrm{SNR}$ threshold). This phenomenon is clearly manifested in both the SSIM and classification accuracy curves, where BPG+LDPC fails catastrophically under low SNR conditions. 
    Moreover, although BPG+LDPC achieves near-optimal performance in high SNR conditions, our method provides comparable performance in that regime while maintaining superior robustness at low SNRs. When the quantization bits are increased to $B=1024$ ($\mathrm{CBR} = 0.16$), both image quality and task accuracy are further improved, indicating that the richer semantic representation enabled by higher quantization resolution, thereby enhancing downstream task performance. 
    Moreover, the performance gap between \texttt{SemanticBC-TriRL} and its equal-weight counterpart, SemanticBC-TriRL-EW, highlights the effectiveness of the proposed task-aware weight assignment module in achieving a better balance between multi-task objectives. It is readily seen that the further incorporation of the weight assignment module leads to further improvement, and such an advantage arises from its ability to leverage the gradient norms of each decoder to obtain proper weights and find an optimal learning direction for the encoder. 

    In Fig. \ref{fig: mnist-fading-ssim-acc}, under the Rayleigh fading channel, the proposed \texttt{SemanticBC-TriRL} demonstrates well balanced performance across both tasks. Compared to the traditional BPG+LDPC scheme, which fails to adapt to fading channel and exhibits negligible performance variation across SNRs, \texttt{SemanticBC-TriRL} shows strong robustness and stable performance gains. Although Deep JSCC and SemanticBC-TriRL-EW show moderate resilience, their overall performance remains less balanced across tasks. Notably, at the same quantization level ($B=128$) under low $\mathrm{SNR}$ conditions ($0 \sim 2$ dB), \texttt{SemanticBC-TriRL} achieves up to $6.7\%$ improvement in SSIM, while SemanticBC-TriRL-EW provides only a slight advantage in classification accuracy (up to $0.95\%$). 
    These results further validate the effectiveness of our adaptive weighting strategy in achieving balanced and robust multi-task performance under varying channel conditions. 
 
\begin{figure*}[t!]
   \centering
   \setlength{\abovecaptionskip}{0.3cm} 
		\includegraphics[width=0.85\linewidth]{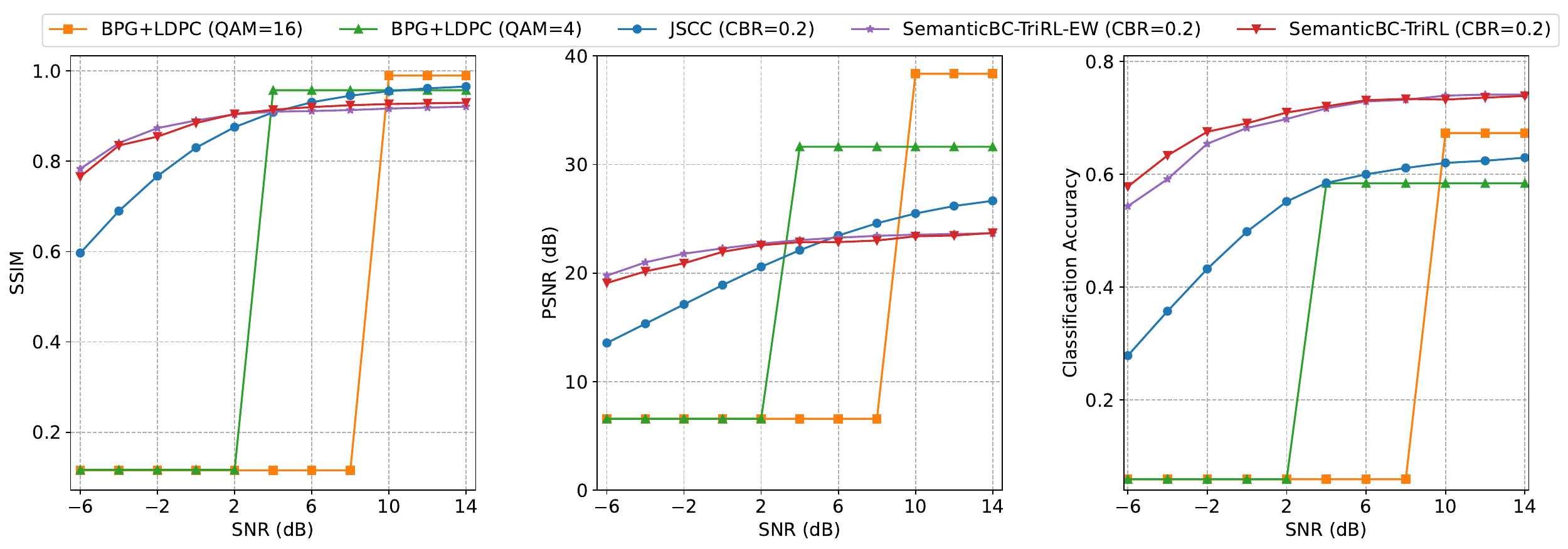}\\   
		\caption{Comparison of semantic BC schemes in terms of SSIM, PSNR and classification accuracy for CIFAR-10 image transmission under AWGN channel.} 
		\label{fig: cifar10-awgn-psnr-ssim-acc}      
\end{figure*} 
\FloatBarrier
\begin{figure*}[hbt]
    \centering
        \subfigure[Original/(PSNR(dB) \& SSIM)]{
        \includegraphics[width=0.20\textwidth]{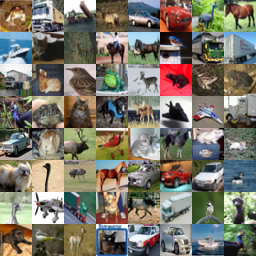} 
    }
        \subfigure[BPG+LDPC (QAM = 16)]{
        \includegraphics[width=0.20\textwidth]{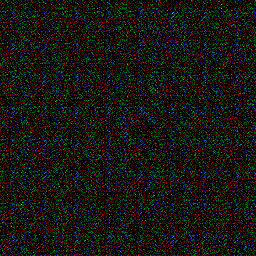} 
        \label{fig: cifar10-bpgldpc16qam}
    }
    \subfigure[BPG+LDPC (QAM = 4)/($31.76$ dB \& $0.9590$)]{
        \includegraphics[width=0.20\textwidth]{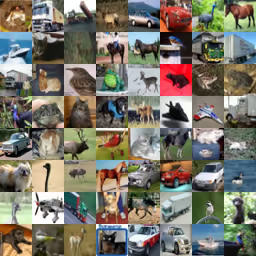} 
        \label{fig: cifar10-bpgldpc4qam}
    }
    \subfigure[JSCC/($22.74$ dB \& $0.9219$)]{
        \includegraphics[width=0.20\textwidth]{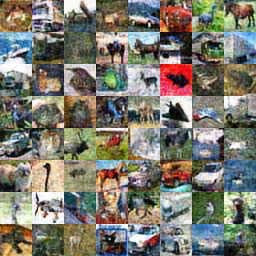} 
        \label{fig: cifar10-jscc}
    }
    \subfigure[SemanticBC-TriRL-EW/($23.18$ dB \&$0.9212$)]{
        \includegraphics[width=0.20\textwidth]{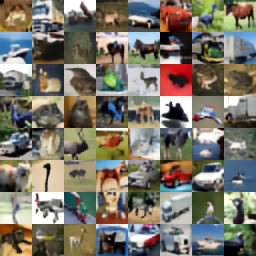}
        \label{fig: cifar10-ew}
    }    
    \subfigure[SemanticBC-TriRL/($23.11$ dB \&$0.9279$)]{
        \includegraphics[width=0.20\textwidth]{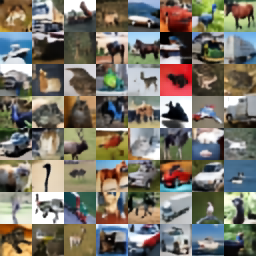}
        \label{fig: cifar10-me}
    } 
    \caption{Examples of visual comparison for CIFAR-10 image transmission through AWGN channel under $5$ dB for comparsion schemes.} 

    \label{fig: cifar10-images}
\end{figure*}

    Next, we also provide the visual results of image recovery on the MNIST dataset in Fig. \ref{fig: mnist-images} and Fig. \ref{fig: mnist-compare-images} under AWGN, Rayleigh, and Rician fading channels at $0$ dB. From Fig. \ref{fig: mnist-images}, it shows that \texttt{SemanticBC-TriRL} maintains high reconstruction quality across different channels at $0$ dB, with minimal visual degradation and consistently high SSIM and PSNR values, demonstrating strong robustness and adaptability to channel impairments. In Fig. \ref{fig: mnist-compare-images}, under AWGN at $0$ dB, all comparison schemes exhibit inferior performance in both visual quality and quantitative metrics compared to the proposed \texttt{SemanticBC-TriRL}. The performance degradation is particularly noticeable for traditional schemes, underscoring their vulnerability under low $\mathrm{SNR}$ conditions. 

    Moreover, we conduct additional evaluations on the more challenging CIFAR-$10$ dataset to investigate the generalization capability of the proposed \texttt{SemanticBC-TriRL}. As shown in Fig. \ref{fig: cifar10-awgn-psnr-ssim-acc}, under the same transmission bit budget, \texttt{SemanticBC-TriRL} and SemanticBC-TriRL-EW achieve comparable performance in terms of SSIM and PSNR for the image reconstruction task. But, \texttt{SemanticBC-TriRL} significantly outperforms other DNN-based baselines in classification accuracy, particularly in the low $\mathrm{SNR}$ regime (\ie, $\mathrm{SNR} < 4$ dB), demonstrating the effectiveness of the proposed PPO-based encoder optimization, which enables more robust semantic representation compared to traditional end-to-end JSCC.
    Although \texttt{SemanticBC-TriRL} exhibits slightly lower reconstruction quality than its equal-weight counterpart in certain cases, it yields notably higher classification accuracy, underscoring the importance of the proposed task-adaptive weight assignment module at the encoder side for balancing multiple downstream objectives. In addition, similar to the observations on the MNIST dataset, the traditional BPG+LDPC scheme continues to suffer from the ``cliff effect'' under AWGN channel, exhibiting abrupt performance degradation at low SNRs and limited robustness to channel impairments. 

    Furthermore, Fig. \ref{fig: cifar10-images} demonstrates the visual reconstruction on the CIFAR-$10$ dataset under $5$ dB AWGN channel, along with the corresponding PSNR and SSIM values. Among DNNs-based methods, the proposed \texttt{SemanticBC-TriRL} (Fig. \ref{fig: cifar10-me}) demonstrates competitive perceptual quality and preserves semantic content more faithfully. In contrast, the traditional BPG+LDPC exhibits pronounced performance disparities across different modulation settings. Specifically, the $16$ QAM (Fig. \ref{fig: cifar10-bpgldpc16qam}) fails to decode image in the presence of $5$dB noise, while $4$ QAM (Fig. \ref{fig: cifar10-bpgldpc4qam}) achieves higher PSNR and SSIM ($31.76$ dB and $0.9590$, respectively), highlighting the strong sensitivity to channel quality and modulation settings.

\subsubsection{Convergence and Scalability of \texttt{SemanticBC-TriRL}}
\label{sec: Convergence of SemanticBC-TriRL}

\begin{figure}[t]
   \centering
   \setlength{\abovecaptionskip}{0.3cm} 
		\includegraphics[width=0.80\linewidth]{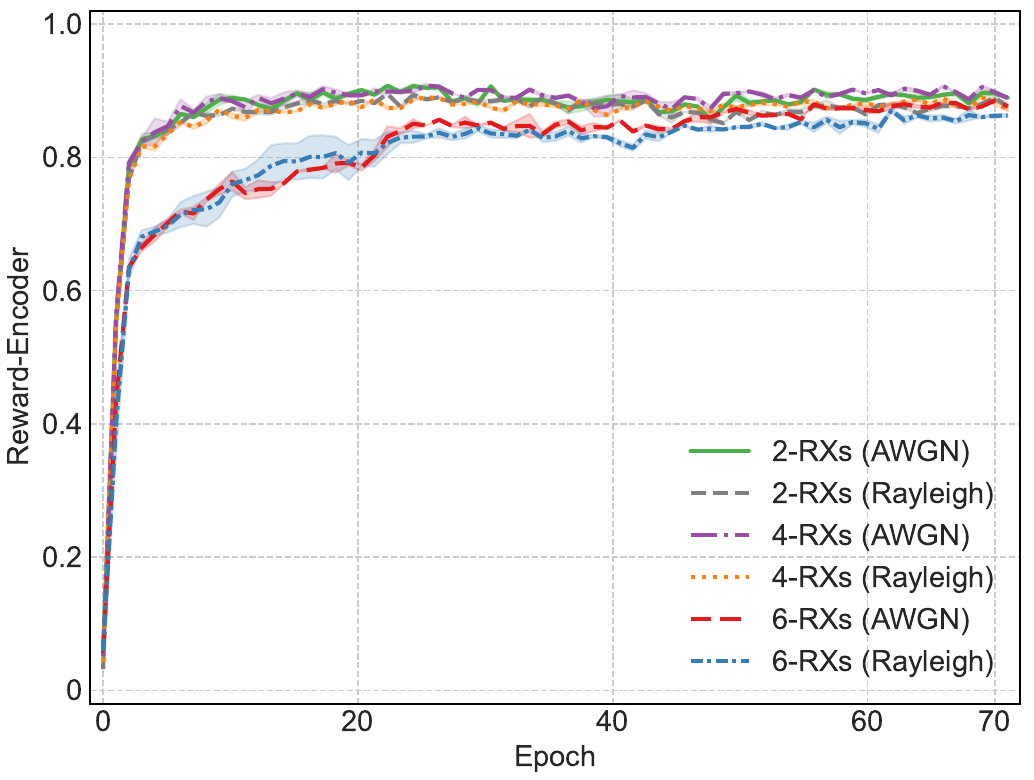}\\     
		\caption{Convergence of encoder of \texttt{SemanticBC-TriRL} for MNIST image transmission with two \rxs~ under AWGN, Rayleigh fading channels, wherein ${\rm RX}_1$ and ${\rm RX}_2$ execute image reconstruction and image classification tasks respectively.} 
		\label{fig: encoder-reward}  
\end{figure}

\begin{figure}[hbt]
    \centering
    \subfigure[Classification loss]{
        \includegraphics[width=0.40\textwidth]{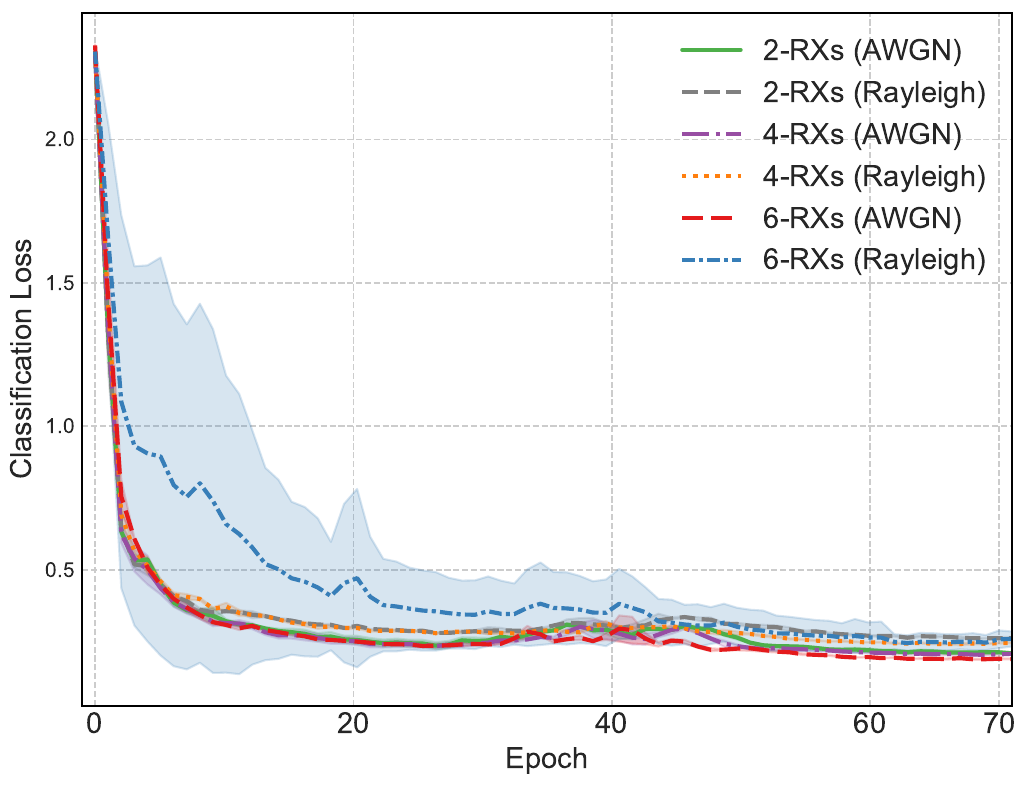}  
        \label{fig: mnist-cla-loss}
    }
    \subfigure[Reconstruction loss]{
        \includegraphics[width=0.40\textwidth]{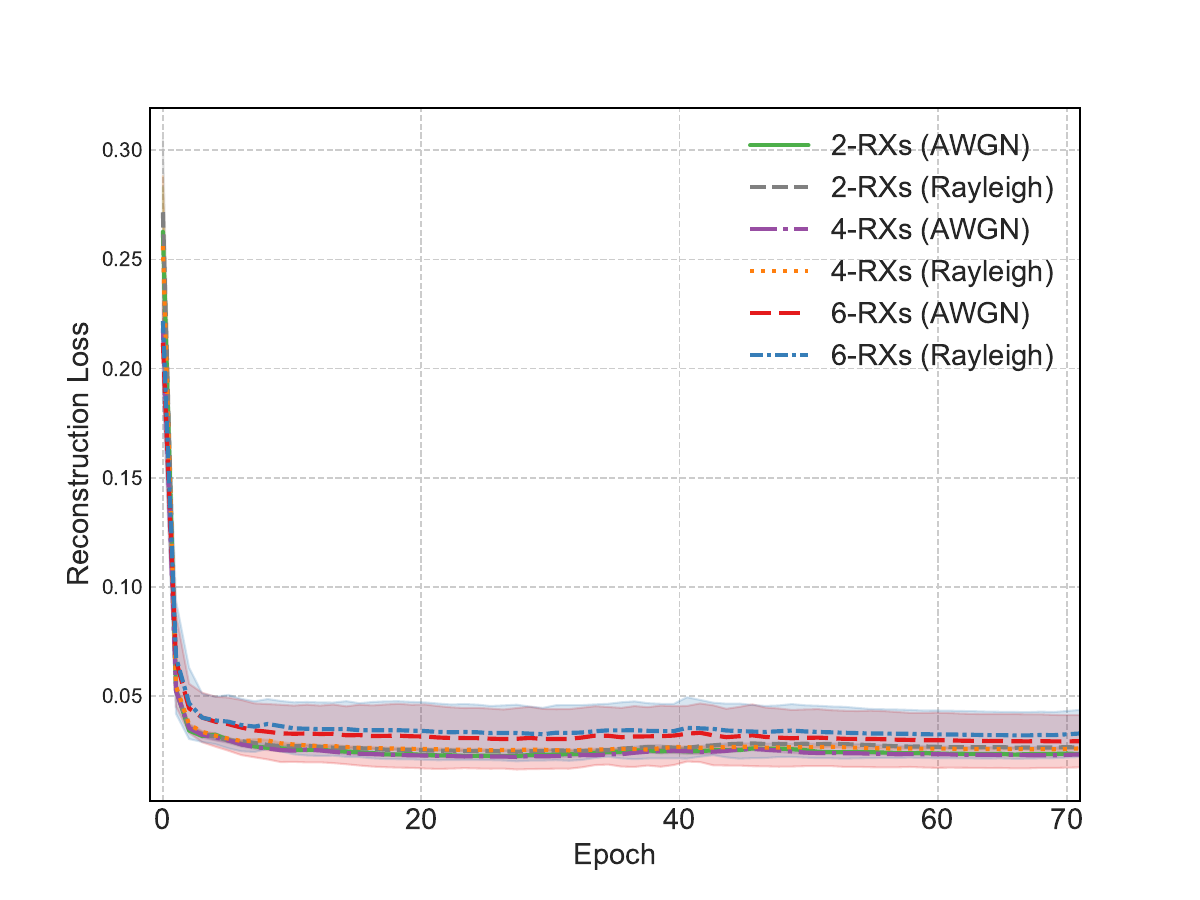}
        \label{fig: mnist-rec-loss}
    }    \caption{The training loss at decoder sides of \texttt{SemanticBC-TriRL} for MNIST image transmission through AWGN and Rayleigh fading channels.}
    \label{fig: mnist-cla-rec-loss}
\end{figure}

\begin{figure}[hbt]
   \centering
   \setlength{\abovecaptionskip}{0.3cm}  
		\includegraphics[width=0.80\linewidth]{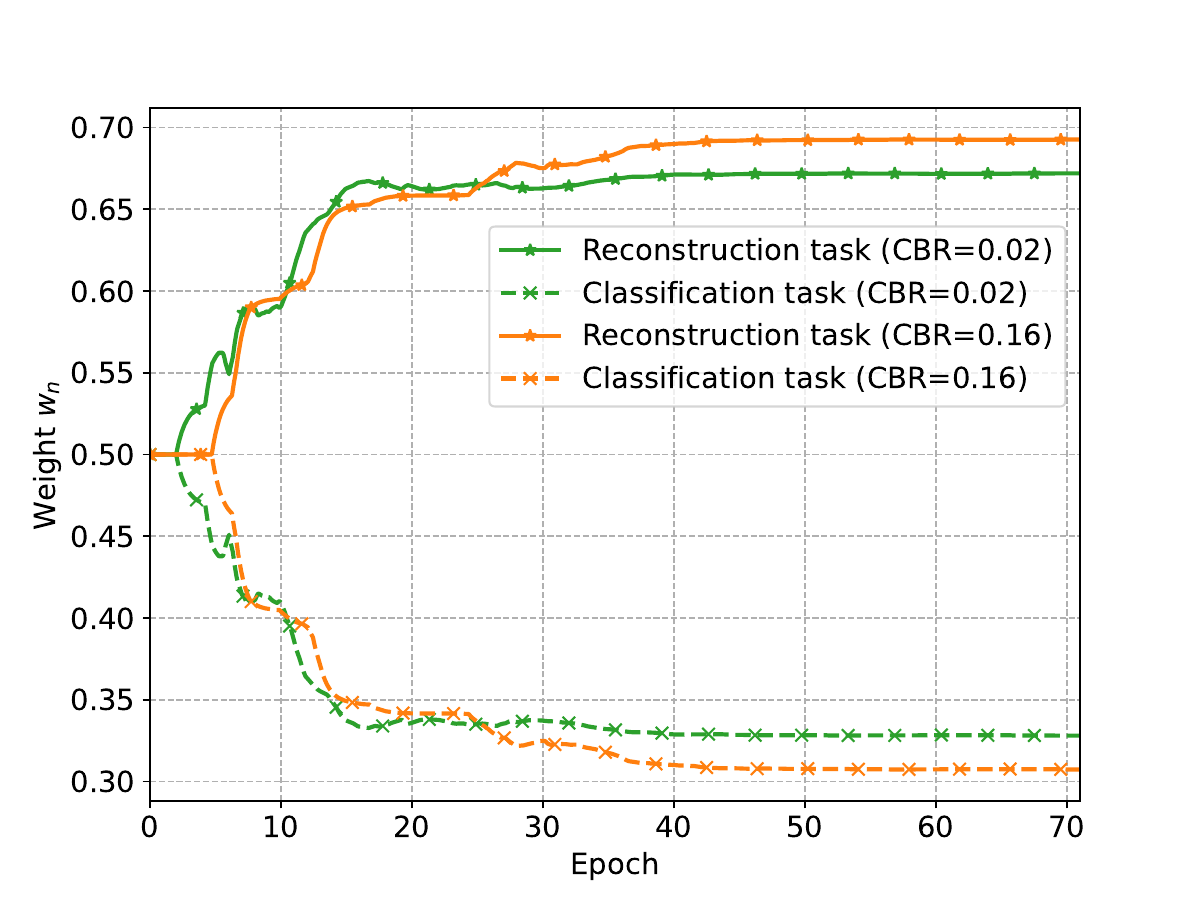}\\   
		\caption{Training weights evolution of our proposed \texttt{SemanticBC-TriRL} for MNIST image transmission in AWGN channel: reconstruction task vs. classification task.}  
		\label{fig: mnist-weights}      
\end{figure}

In this subsection, we investigate the convergence and scalability of the proposed \texttt{SemanticBC-TriRL}.
To further validate the effectiveness of the tri-level alternate learning framework, as illustrated in Fig. \ref{fig: encoder-reward} and Fig. \ref{fig: mnist-cla-rec-loss}, we present the training evolution for the encoder reward at \tx~under different channels with varying number of \rxs, as well as the classification and reconstruction losses of the respective decoders at \rxs. From Fig. \ref{fig: encoder-reward}, it can be observed that when the number of \rxs~is $2$ (\ie, $N=2$), the encoder reward defined in \eqref{eq: encoder reward} exhibits a sharp increase within the initial $10$ epochs under both channel conditions, followed by a gradual convergence, which demonstrates the stability and effectiveness of the PPO-based encoder optimization. Correspondingly, as shown in Fig. \ref{fig: mnist-cla-loss} and Fig. \ref{fig: mnist-rec-loss}, both the classification loss in \eqref{eq: cla-loss} and the reconstruction loss in \eqref{eq: rec-loss} decrease consistently and converge to low values, thereby validating the success of multi-task learning and the feasibility of the proposed tri-level alternating optimization strategy. 

On the other hand, as the number of \rxs~increases from $2$ to $6$ in Fig. \ref{fig: encoder-reward} and Fig. \ref{fig: mnist-cla-rec-loss}, a clear trend can be observed that both the encoder reward and the decoders' losses remain stable across different settings and ultimately converge to comparable performance regardless of the number of \rxs. 
This observation underscores the scalability of the proposed tri-level optimization framework among the \tx~side and heterogeneous \rxs. It is worth noting that the final convergence is similar across AWGN and Rayleigh fading channels, further confirming the robustness of the proposed learning framework under varying channel impairments. 

In addition, we testify to the evolution of task weights assigned during encoder training, as shown in Fig. \ref{fig: mnist-weights}. In both cases ($\mathrm{CBR}=0.02$ and $\mathrm{CBR}=0.16$), the weight assignment module consistently allocates a higher weight to the reconstruction task, indicating its larger influence on encoder optimization. Notably, this preference becomes more pronounced during the early stages of training under lower bandwidth conditions such as $\mathrm{CBR}=0.02$, as the reconstruction task demands finer-grained semantic representations, prompting the encoder to favor structural fidelity.

\subsubsection{Ablation Study}
\label{sec: Ablation Study}

\begin{figure}[t]
    \centering
    \subfigure[SSIM]{
        \includegraphics[width=0.40\textwidth]{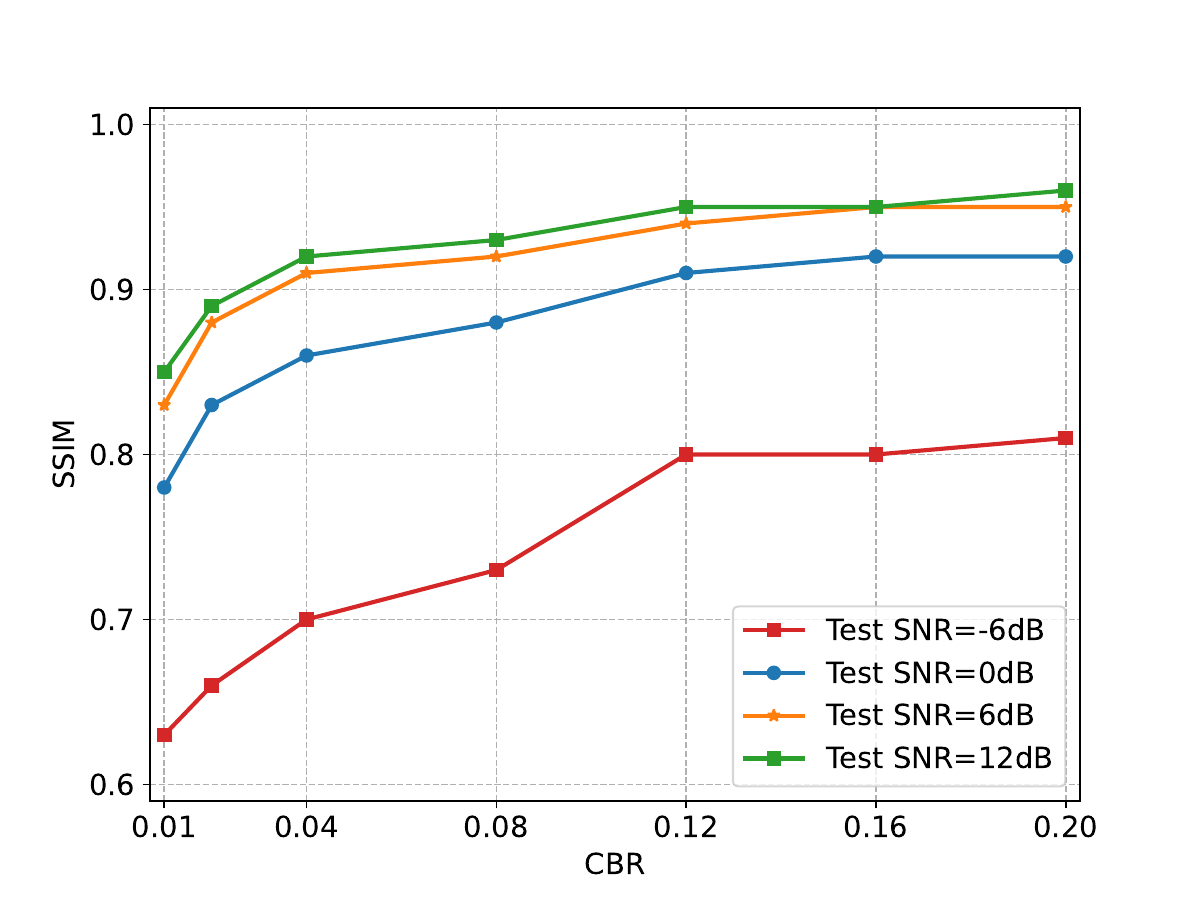}  
    }
    \subfigure[Classification accuracy]{
        \includegraphics[width=0.40\textwidth]{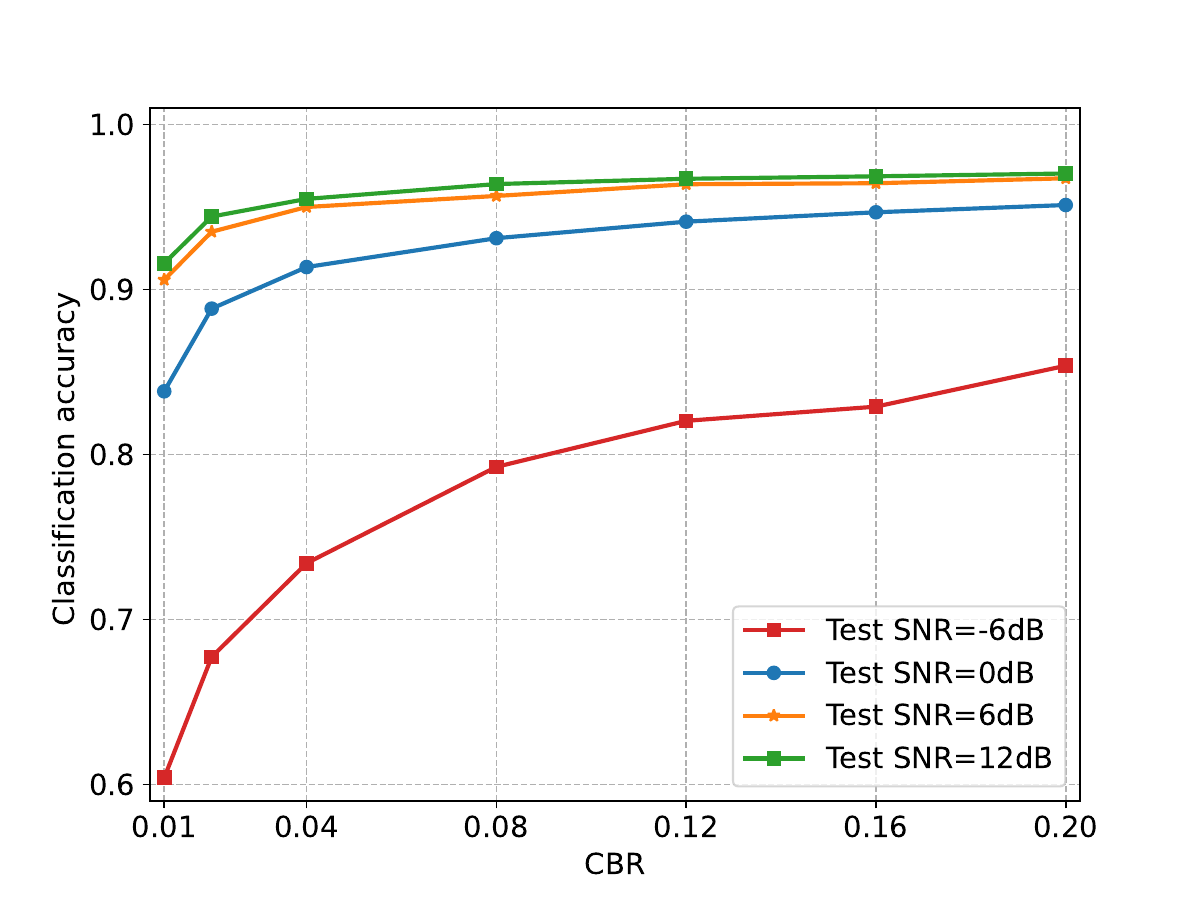}
    }    \caption{Impact of $\mathrm{CBR}$ on task performance of \texttt{SemanticBC-TriRL} for MNIST image transmission over AWGN channel}
    \label{fig: mnist-cbr}
\end{figure} 

Moreover, we further conduct an ablation study under varying $\mathrm{CBR}$ levels to evaluate the impact of bit budget on the two tasks at the \rxs~sides. As shown in Fig. \ref{fig: mnist-cbr}, both SSIM and classification accuracy improve steadily with increasing $\mathrm{CBR}$ under different SNRs, demonstrating that a higher bandwidth allows the encoder to transmit richer semantic information beneficial to both image reconstruction and classification.  Notably, the performance gains are more significant when $\mathrm{CBR}$ increases from very low values (e.g., $0.01$ to $0.05$), especially under low SNRs conditions (e.g., $-6$ dB), where additional bits substantially alleviate information bottlenecks.  When $\mathrm{CBR}$ exceeds approximately $0.12$, the performance tends to saturate, especially for higher SNRs, suggesting that further increases in bandwidth yield diminishing returns once the semantic representation becomes sufficiently expressive.

\section{Conclusions}
\label{sec: Conclusions}
In this paper, we have proposed a tri-level semantic BC framework, termed \texttt{SemanticBC-TriRL}, tailored for multi-task scenarios involving heterogeneous downstream tasks such as image classification and content reconstruction. This tri-level optimization framework achieves independent and alternating updates of the encoder and decoders, initiating updates from the decoders to the weight assignment module in a bottom-up manner, formulated as a multi-objective optimization problem with corresponding constraints. 
Specifically, at the first level, task-specific decoders are updated via supervised learning. Then, at the second level, the encoder is optimized using PPO to adapt to heterogeneous task requirements. 
Furthermore, to mitigate potential conflicts across multiple tasks, we have introduced a gradient aggregation-based weight assignment module at the third level to appropriately balance task weights for encoder optimization. Besides, we have also delved into the convergence analysis of \texttt{SemanticBC-TriRL} and provided non-asymptotic convergence. Extensive simulation results have demonstrated that the proposed semantic BC system achieves superior task performance and effectively maintains a stable trade-off. Moreover, we have also investigated the performance under varying numbers of \rxs~and different lengths of transmitted symbols, demonstrating superior scalability and adaptability with respect to the number of \rxs~and channel environments.
In the future, we will extend this optimization framework to accommodate a broader range of task types and integrate it with traditional multi-task learning methods, such as client selection and asynchronous transmission, to verify its generalization.

\bibliographystyle{IEEEtran}
\bibliography{SemanticBC-Multi-Task}

\begin{IEEEbiography}[{\includegraphics[width=1in,height=1.25in,clip,keepaspectratio]{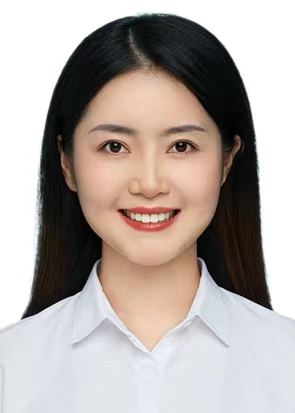}}]{Zhilin Lu} received the M.S. degree from North China Electric Power University, Beijing,
China, in 2021. She is currently pursuing the Ph.D. degree with the College of Information Science and Electronic Engineering, Zhejiang University, Hangzhou, China. Her research interests include semantic communication and deep reinforcement learning. 
\end{IEEEbiography}

\begin{IEEEbiography}[{\includegraphics[width=1in,height=1.25in,clip,keepaspectratio]{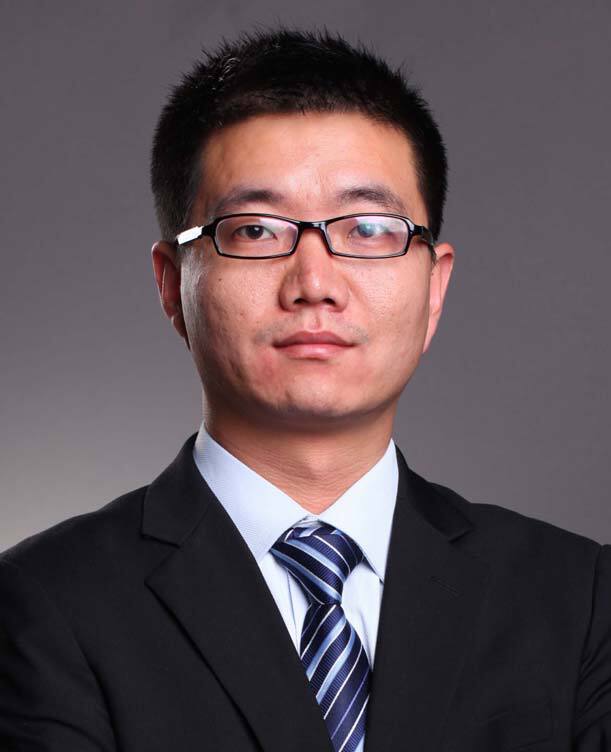}}]{Rongpeng Li}
    (Senior Member, IEEE) is currently an Associate Professor with the College of Information Science and Electronic Engineering, Zhejiang University, Hangzhou, China. He was a Research Engineer with the Wireless Communication Laboratory, Huawei Technologies Company, Ltd., Shanghai,
    China, from August 2015 to September 2016. He was a Visiting Scholar with the Department of Computer Science and Technology, University of Cambridge, Cambridge, U.K., from February 2020 to August 2020. His research interest currently focuses on networked intelligence for communications evolving (NICE). He received the Wu Wenjun Artificial Intelligence Excellent Youth Award in 2021. He serves as an Editor for \emph{China Communications}.
\end{IEEEbiography}	

\begin{IEEEbiography}[{\includegraphics[width=1in,height=1.25in,clip,keepaspectratio]{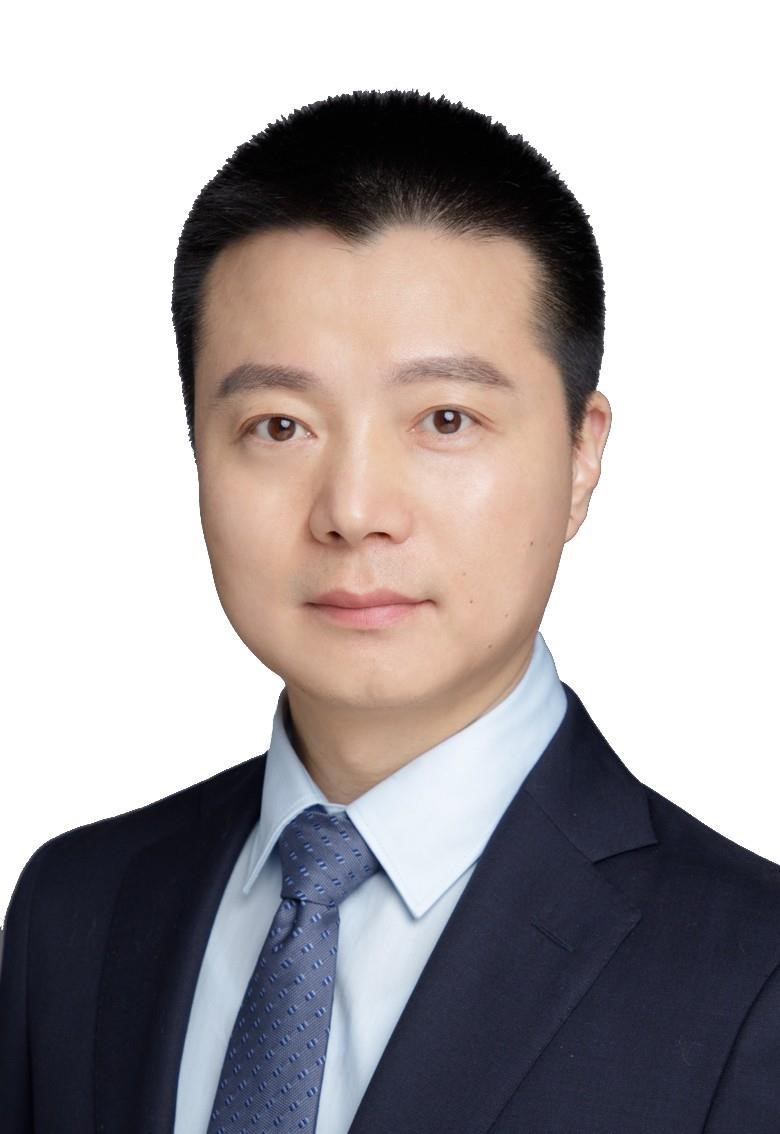}}]{Zhifeng Zhao}
    (Senior Member, IEEE) received the B.E. degree in computer science, the M.E. degree in communication and information systems, and the Ph.D. degree in communication and information systems from the PLA University of Science and Technology, Nanjing, China, in 1996, 1999, and 2002, respectively. From 2002 to 2004, he acted as a Post-Doctoral Researcher with Zhejiang University, Hangzhou, China, where his researches were focused on multimedia next-generation networks (NGNs) and softswitch technology for energy efficiency. From 2005 to 2006, he acted as a Senior Researcher with the PLA University of Science and Technology, where
    he performed research and development on advanced energy-efficient wireless router, \emph{ad-hoc} network simulator, and cognitive mesh networking test-bed.
    From 2006 to 2019, he was an Associate Professor with the College of Information Science and Electronic Engineering, Zhejiang University. Currently, he is with the Zhejiang Lab, Hangzhou as the Chief Engineering Officer. His research areas include software defined networks (SDNs), wireless network in 6G, computing networks, and collective intelligence. He was the Symposium Co-Chair of ChinaCom 2009 and 2010. He is the Technical Program Committee (TPC) Co-Chair of the 10th IEEE International Symposium on Communication and Information Technology (ISCIT 2010).
\end{IEEEbiography} 

\begin{IEEEbiography}[{\includegraphics[width=1in,height=1.25in,clip,keepaspectratio]{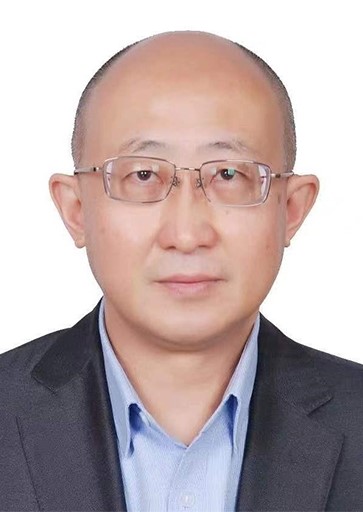}}]{Honggang Zhang}
   (Fellow, IEEE) is a Professor with the Faculty of Data Science, City University of Macau, Macau, China. He was the founding Chief Managing Editor of Intelligent Computing, a Science Partner Journal, as well as a Professor with the College of Information Science and Electronic Engineering, Zhejiang University, Hangzhou, China. He was an Honorary Visiting Professor with the University of York, York, U.K., and an International Chair Professor of Excellence with the Université Européenne de Bretagne and Supélec, France. He has coauthored and edited two books: Cognitive Communications: Distributed Artificial Intelligence (DAI), Regulatory Policy \& Economics, Implementation (John Wiley \& Sons) and Green Communications: Theoretical Fundamentals, Algorithms and Applications (CRC Press), respectively. His research interests include cognitive radio networks, semantic communications, green communications, machine learning, artificial intelligence, intelligent computing, and Internet of Intelligence. He is a co-recipient of the 2021 IEEE Communications Society Outstanding Paper Award and the 2021 IEEE Internet of Things Journal Best Paper Award. He was the leading Guest Editor for the Special Issues on Green Communications of the IEEE Communications Magazine. He served as a Series Editor for the IEEE Communications Magazine (Green Communications and Computing Networks Series) from 2015 to 2018 and the Chair of the Technical Committee on Cognitive Networks of the IEEE Communications Society from 2011 to 2012. He is the Associate Editor-in-Chief of China Communications.
\end{IEEEbiography}

\end{document}